\theoremstyle{remark}
\newtheorem{theorem}{Theorem}
\newtheorem{lemma}{Lemma}
\newtheorem{corollary}{Corollary}
\newtheorem{example}{Example}
\begin{document}
    \title{Exploring strong locality : Quantum state discrimination regime and beyond}
    
    \author{Subrata Bera}
    \email{98subratabera@gmail.com}
    \affiliation{Department of Applied Mathematics, University of Calcutta, 92, A.P.C. Road, Kolkata- 700009, India}
    
    \author{Atanu Bhunia}
    \email{atanu.bhunia31@gmail.com}
    \affiliation{Department of Applied Mathematics, University of Calcutta, 92, A.P.C. Road, Kolkata- 700009, India}
    
    \author{Indranil Biswas}
    \email{indranilbiswas74@gmail.com}
    \affiliation{Department of Applied Mathematics, University of Calcutta, 92, A.P.C. Road, Kolkata- 700009, India}
    
    \author{Indrani Chattopadhyay}
    \email{icappmath@caluniv.ac.in}
    \affiliation{Department of Applied Mathematics, University of Calcutta, 92, A.P.C. Road, Kolkata- 700009, India}
    
    \author{Debasis Sarkar}
    \email{dsarkar1x@gmail.com,dsappmath@caluniv.ac.in}
    \affiliation{Department of Applied Mathematics, University of Calcutta, 92, A.P.C. Road, Kolkata- 700009, India}	

 \begin{abstract}
Based on the conviction of switching information from locally accessible to locally hidden environs, the concept of hidden nonlocality activation has recently been highlighted by Bandyopadhyay et al. in [Phys. Rev. A 104, L050201 (2021)]. They have demonstrated that a certain locally distinguishable set of pure quantum states can be transformed into a locally indistinguishable set with certainty through orthogonality preserving local measurements (OPLMs). This transformation makes the set locally inaccessible, despite being locally accessible before. This phenomenon is defined as the activation of hidden nonlocality. In this paper, we present two classes of locally distinguishable sets within $(2m+1) \otimes 2 \otimes (2m+1)$ systems. One class reveals nonlocality through local operations, whereas the other requires joint measurements for it. As the later class depends on nonlocal operations to exhibit nonlocality, it arguably has a lower degree of nonlocality, and accordingly, can be considered as more local compared to the first class. This analysis exhibits a stronger manifestation of locality by elucidating the nuanced interplay between these distinct local phenomena within the framework of quantum state discrimination. Furthermore, we also explore their significant applications in the context of data hiding. Additionally, we introduce the concept of \emph{``strong local"} set and compare it with various activatable sets, highlighting differences in terms of locality.
\end{abstract} 
	
    \date{\today}
    \maketitle

     \section{INTRODUCTION}
 
Apart from the Bell nonlocality, there are other manifestations of nonlocality that have fascinated researchers in recent decades. In fact, the local indistinguishability of some sets of multipartite orthogonal quantum states has been comprehensively used to embellish the phenomenon of quantum nonlocality \cite{BennettPB1999,BennettUPB1999,bennett1996,popescu2001,xin2008,Walgate2000,Virmani,Ghosh2001,Groisman,Walgate2002,Divincinzo,Horodecki2003,Fan2004,Ghosh2004,Nathanson2005,Watrous2005,Niset2006,Ye2007,Fan2007,Runyo2007,somsubhro2009,Feng2009,Runyo2010,Yu2012,Yang2013,Zhang2014,somsubhro2009(1),somsubhro2010,yu2014,somsubhro2014,somsubhro2016}. In this scenario, a state is secretly chosen from a known set of orthogonal quantum states. The goal is to identify the state by performing local operations and classical communications(LOCC). If they can identify each state with certainty, the set is pronounced to be locally distinguishable. Otherwise, the set is called locally indistinguishable. Local indistinguishability of quantum states is a key concept to understand the restrictions of LOCC. In this context, it might seem intuitive to assume that entanglement is necessary and sufficient for local indistinguishability due to complex nature of quantum correlations. However, researcher has demonstrated that entanglement is not necessary for the local indistinguishability of quantum states \cite{BennettPB1999,BennettUPB1999,Zhang2015,Wang2015,Chen2015,Yang2015,Zhang2016,Xu2016(2),Zhang2016(1),Xu2016(1),bhunia2023,bhunia2024,biswas2023, Halder2019strong nonlocality,Halder2019peres set,Xzhang2017,Xu2017,Wang2017,Cohen2008,Zhang2019,somsubhro2018,zhang2018,Halder2018,Yuan2020,Rout2019,bhunia2020,bhunia2022}. Bennett et. al. \cite{BennettPB1999} was the first in 1999 to demonstrate a set of nine pure orthogonal product states in $3\otimes3$ system that cannot be perfectly distinguished using LOCC, revealing the phenomenon of nonlocality without entanglement in the sense that they are locally indistinguishable. 
On the contrary, in 2000 Walgate et. al.\cite{Walgate2000} showed that any two multipartite orthogonal pure states whether product or entangled can be perfectly distinguished by LOCC. This result indicates that entanglement is not also sufficient for local indistinguishability. Local indistinguishability of quantum states has been practically applied in quantum cryptography primitives such as data hiding, secret sharing, etc.,\cite{ terhaldatahiding, haydendatahiding, winterdatahiding, wehner2020, chaves2020, lamidatahiding, plahiding}. Consequently, locally indistinguishable sets can be considered as resources in quantum information processing. A relevant question in this context was explored by Bandyopadhyay and Halder in 2021 \cite{BandyopadhyayHalderActivation2021}, where they investigated whether it is possible to locally transform a local set of quantum states into a nonlocal one that can be used for data hiding. They provided a positive answer, demonstrating that certain local sets of orthogonal pure states can be deterministically transformed into locally indistinguishable sets through orthogonality preserving local measurements(OPLMs). This phenomenon is called activation of hidden nonlocality. Later, in 2022, Li and Zheng further extended these findings by introducing a locally distinguishable set of bipartite product states that can be precisely converted to a locally irreducible (and hence indistinguishable) set through OPLMs\cite{Li2022}. In the same year, Ghosh et al. provided a scenario to activate strong nonlocality from a local set where the sets of postmeasurement states are not only locally indistinguishable but also remain nonlocal even after performing joint measurements \cite{GhoshStrongActivation2022}. Here, it is important to note that when we refer to a set as \emph{``local"}, we specifically mean that it is locally distinguishable. Likewise, \emph{``nonlocal"} means that the set is locally indistinguishable. This usage occurs frequently throughout the paper. It is crucial not to confuse these with the terms as used in the context of Bell nonlocality, where they carry entirely different meanings.   

The activation of hidden nonlocality in multipartite scenarios has not yet been fully explored, and many important questions remain unanswered. For instance, could there be a locally distinguishable set whose nonlocality cannot be activated through local measurements but instead requires joint measurements? In such cases, revealing the hidden nonlocality necessitates nonlocal measurements, in contrast to the previously mentioned locally activable sets. Accordingly, it can be claimed that the degree of nonlocality of such a set is lower than that of the mentioned locally activable sets, and therefore, it can be considered as more local in comparison. This paper aims to address this question by analyzing the activation of hidden nonlocality in various distinguishable sets under local projective measurements and classical communications.

Local projective measurements and classical communications(LPCC) are strict restriction of LOCC. Consequently, any task that can be achieved using LPCC is also feasible under LOCC; however, the converse is not always true \cite{BennettUPB1999,Cohen2007,Xin2008}. Understanding and probing the gap between LOCC and LPCC provides a significant challenge, offering valuable insights into the limitations and potentials of the state discrimination tasks under these differing operational frameworks in quantum information theory. In this paper, we first construct two LPCC distinguishable sets, $S_1$ and $S_2$ within a separable complex Hilbert space $\mathcal{H}^3_{A}\otimes\mathcal{H}^2_{B}\otimes\mathcal{H}^3_{C}$. For $S_1$, we show that the activation of nonlocality is possible through OPLMs under LPCC, whereas this is impossible for the set $S_2$. Interestingly, however, activation is achieved for $S_2$ when Bob($B$) and Charlie($C$) are allowed to perform joint projection-valued measures(PVMs). Under the LPCC scenario, this result indicates that $S_2$ exhibits a lower degree of nonlocality compared $S_1$. Conversely, $S_2$ can be considered to have a higher degree of locality than $S_1$. In other words, $S_2$ is more local than $S_1$. This motivates our exploration into comparing the degrees of locality among different classes of quantum states based on the activation of hidden nonlocality. After that, we have generalized this idea in $(2m+1) \otimes 2 \otimes (2m+1)$ system (the sets $S_{1,m}$ and $S_{2,m}$ associated with $S_1$ and $S_2$ respectively). Finally, we end up by constructing a local orthogonal set $S$ in $8 \otimes 8 \otimes 8$ system, whose nonlocality cannot be activated through OPLMs under LPCC, but instead requires joint measurements by any pair of parties. We begin with some preliminary notions which we require in the subsequent discussions.

\section{Preliminaries}
A measurement on a $d$-dimensional quantum
system can be expressed as a set of \emph{positive operator-valued measure (POVM)} elements $\{M_l\}_l$. These elements are the positive semi-definite Hermitian matrices that satisfy the completeness relation $\sum_l M_l = \mathbb{I}_d$, where $\mathbb{I}_d$ is
the identity matrix of order d.

\emph{Definition 1}--- If all the POVM elements of a measurement set-up associated with a state discrimination task for a given set of quantum states are proportional to the identity matrix, then such a measurement is not useful for extracting information for this task and is called a \emph{trivial measurement}. Conversely, if not all POVM elements of a measurement are proportional to the identity matrix then the measurement is said to be a \emph{nontrivial measurement}\cite{Halder2018}.

A necessary and sufficient condition for distinguishing a set of quantum states is that the states are mutually orthogonal. Consequently, any discrimination protocol chosen to distinguish these states involves only POVM elements that preserve their orthogonality (it might also eliminate one or more states). Such measurements are said to be orthogonality preserving measurements (OPMs)\cite{Halder2018}. In particular, when these measurements act locally, they are said to be orthogonality preserving local measurements (OPLMs)\cite{Walgate2002, BandyopadhyayHalderActivation2021}. 

The concept of local activation begins with a locally distinguishable set of pure quantum states. The objective is to identify OPLMs in such a way that every set of postmeasurement states becomes locally indistinguishable. This phenomenon is known as the \emph{activation of hidden nonlocality} \cite{BandyopadhyayHalderActivation2021}. The authors argued that this activation is \emph{genuine} for the sets free from local redundancy. 

\emph{Definition 2}---A set of mutually orthogonal pure states is said to be \emph{locally redundant} if and only if the orthogonality among its states is preserved even after discarding one or more subsystems. Such sets exhibit trivial activation phenomena, as extensively described in \cite{GuptaGhoshHierarchicalActivation2023}. Note that:

(a) Sets containing locally irredundant subsets automatically avoid local redundancy.

(b) Local redundancy is impossible if every local dimension of the corresponding system is prime.

A particularly interesting subclass of POVM measurements happens to be \emph{projection-valued measure(PVM)}, which is a set of orthogonal projectors $\{P_l\}$ that sum to the identity matrix: $\sum_l P_l = \mathbb{I}$ and $P_iP_j=\delta_{ij}P_i$.

\emph{Definition 3}--- In a LOCC protocol when the local measurements are restricted to PVMs instead of POVMs, such a protocol is commonly known to be \emph{local projective measurements and classical communications (LPCC)}.

\emph{Definition 4}--- A set of quantum states is said to be \emph{locally irreducible} \cite{Halder2019strong nonlocality} if it is not possible to eliminate one or more states under OPLMs. A sufficient condition for the irreducibility of a set is that there exist no nontrivial measurements as OPLMs. Local irreducibility sufficiently ensures local indistinguishability. However, the converse is not true.

For the sake of brevity, we have used the unnormalized states throughout the paper.

\section{activation of nonlocality: from locally accessible to locally inaccessible information}

In the realm of quantum information theory, we encounter two distinct classes of locally distinguishable sets within the $n$-partite quantum system, say $\mathscr{S}_1$ and $\mathscr{S}_2$. None of these sets can be directly employed to securely transmit classical information as the information encoded in a locally distinguishable set is always locally accessible. We consider $\mathscr{S}_1$ to be a set where nonlocality(in the sense of local indistinguishability) can be activated through local operations. In the context of data hiding and secret sharing, such phenomena can effectively turn a resourceless set into a resourceful one. It can be shown that such a set can be converted to a locally indistinguishable set through OPLMs with certainty. This transformation allows us to hide information locally within $\mathscr{S}_1$. In other words, the information encoded in $\mathscr{S}_1$ which was locally accessible initially, after transformation, can not be accessed completely by the local observers; part of it will always remain hidden. Naturally, the realization of a stronger form of nonlocality activation is an immediate question to ponder. This is precisely what Ghosh et al. explained in their paper \cite{GhoshStrongActivation2022}. They have presented a stronger version of nonlocality activation related to the impossibility of state elimination, beyond the nonlocal aspects of state discrimination. Specifically, they demonstrated some classes of genuinely activable sets that can be transformed into locally irreducible sets through OPLMs. This prompts a query about the existence of the strongest form of genuinely activable nonlocality. For that, they further present a genuinely activable set (here we denote it as $\mathscr{U}_1$), where the postmeasurement states are not only locally irreducible but also remain irreducible in every partition. Consequently, in this configuration, the parties within the corresponding quantum system can transform $\mathscr{U}_1$ to a strong nonlocal set through OPLMs.

On the other hand, $\mathscr{S}_2$ represents such sets where genuine activation is impossible through OPLMs. This means that, when the parties are bound to act locally, the outcomes consistently exhibit local sets. Consequently, the classical information encoded in the states of such a set always remains accessible to the local observers. Thus, $\mathscr{S}_2$ appears to be unsuitable for applications in secret sharing and data hiding. This characteristic places $\mathscr{S}_2$ on the more local end of the spectrum compared to $\mathscr{S}_1$, as the hidden nonlocality of $\mathscr{S}_2$ cannot be revealed through local operations as effectively as with $\mathscr{S}_1$. An important question may arise here: Could $\mathscr{S}_2$ still contain hidden nonlocality that has yet to be unveiled? If so, understanding the nature of such nonlocality and the associated methods of accessing it becomes crucial. To demonstrate this, we present an example of a local set whose nonlocality cannot be activated by local observers but can be activated through joint measurements performed by the two observers. In this context, we will always avoid using global measurements, as any set of orthogonal quantum states is globally distinguishable. Therefore, global measurements are incapable of activating nonlocality in any set.

Further exploration reveals an intriguing scenario: some local sets remain impervious to activating nonlocality even after using joint measurements. Consequently, we can assert that such sets can never serve as viable quantum resources for data hiding and secret sharing. From the activation perspective, we can classify these sets as \emph{strong local sets}, denoted as $\mathscr{U}_2$. Numerous examples of local sets exhibit this distinctive feature:

(1) It is well known that any pair of orthogonal pure states is locally distinguishable\cite{Walgate2000} and therefore, remains locally accessible under OPLMs. Consequently, any set containing only two pure orthogonal states is strongly local, regardless of the number of local observers or the dimension of the corresponding quantum system. 

(2) In \cite{BandyopadhyayHalderActivation2021}, Bandyopadhyay and Halder proved that any local set in $2 \otimes 2$ cannot be activated under OPLMs, further exemplifying the concept of strong local sets. 

(3) A set of three pure product states is strong local. This can be shown through the following:

(a)\emph{A set of three orthogonal pure product states is distinguishable under LPCC(LOCC ).} Consider an n-party quantum system with three orthogonal product states $\ket{\phi_1}=\otimes_{i=1}^{n}\ket{\alpha_i}$, $\ket{\phi_2}=\otimes_{i=1}^{n}\ket{\beta_i}$ and $\ket{\phi_3}=\otimes_{i=1}^{n}\ket{\gamma_i}$, where $\braket{\alpha_i}{\alpha_i}=\braket{\beta_i}{\beta_i}=\braket{\gamma_i}{\gamma_i}=1$, for $i=1,2,\ldots, n$ and $\braket{\phi_l}{\phi_k}=0$, for $l\neq k$. 

There exist atleast one $j$ such that $\braket{\alpha_j}{\beta_j}=0$. We can write $\ket{\gamma_j}=a\ket{\alpha_j}+b\ket{\beta_j}+c\ket{\eta_j}$, where $\ket{\eta_j}\in \text{span}\{\alpha_j,\beta_j\}^{\perp}$ and $|a|^2+|b|^2+|c|^2=1$.

Suppose party-$j$ measures using two PVMs $M_0=\ketbra{\alpha_j}{\alpha_j}$ and $M_1=\mathbb{I}-M_0$.

If the outcome is $0$, the postmeasurement states will be

$\ket{\phi_1}$ and $\ket{\phi_3'}=\otimes_{i=1}^{j-1}\ket{\gamma_i}\otimes(a\ket{\alpha_j})\otimes_{i=j+1}^{n}\ket{\gamma_i}$

Therefore, 
\begin{equation}
\begin{array}{l}
\braket{\phi_1}{\phi_3'}\\

=a.\braket{\alpha_j}{\alpha_j}.\prod_{i=1,\neq j}^{n}\braket{\alpha_i}{\gamma_i}\\

=\braket{\alpha_j}{a\alpha_j+b\beta_j+c\eta_j}.\prod_{i=1,\neq j}^{n}\braket{\alpha_i}{\gamma_i}\\

=\braket{\alpha_j}{\gamma_j}.\prod_{i=1,\neq j}^{n}\braket{\alpha_i}{\gamma_i}\\
=\prod_{i=1}^{n}\braket{\alpha_i}{\gamma_i}\\
=\braket{\phi_1}{\phi_3}\\
=0
\end{array}
\end{equation}

Thus, $\ket{\phi_1}$ and $\ket{\phi_3'}$ are two orthogonal pure product states and hence distinguishable under LPCC(LOCC ).

Similarly, for the outcome $1$ we will also get two orthogonal pure product states which are distinguishable under LPCC(LOCC ).

(b)\emph{OPLM cannot create entanglement from product states.} Therefore every outcome under OPLM must lead to a distinguishable set.

Now our aim is to illustrate the entire concept within a one-dimensional framework, as depicted in FIG.[\ref{fig:Line}]. Our main objective is to arrange the aforementioned sets along the line $L_2$, where the degree of locality increases from left to right.

\begin{figure}[htp]
	\centering
	\includegraphics[scale=.42]{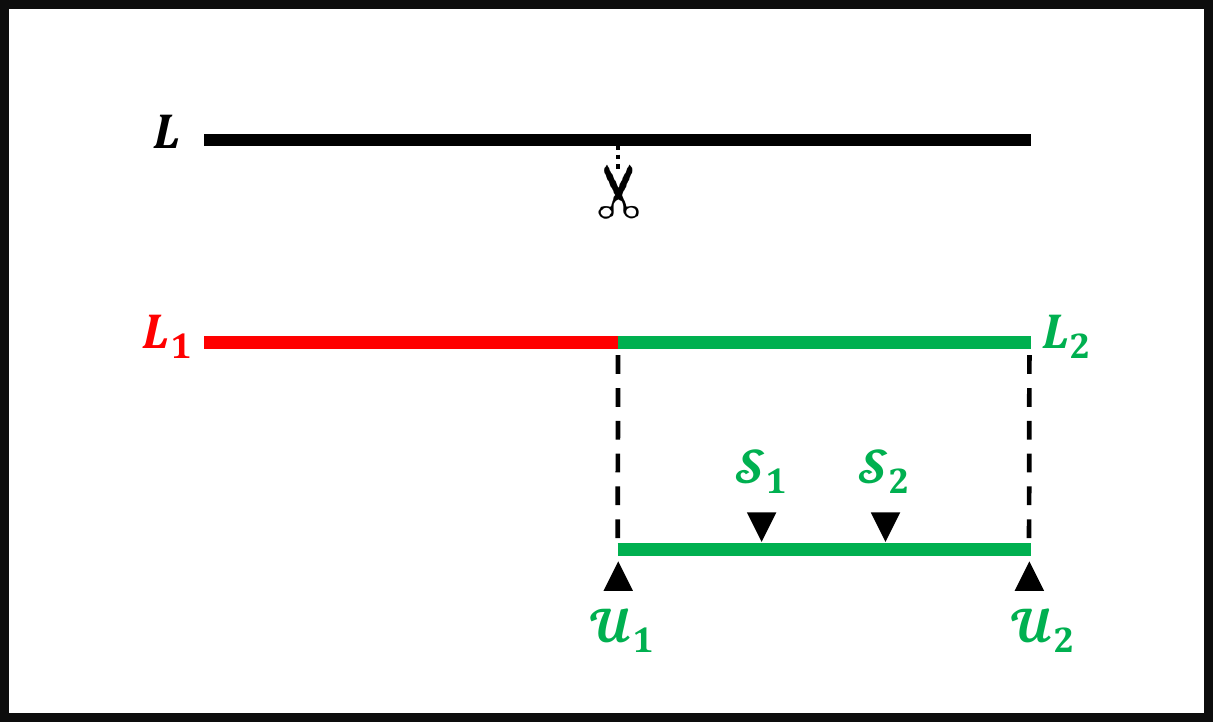}
	\caption{Imagine a line $L$ divided into two segments, $L_1$ and $L_2$. On this line, $L_1$ represents locally indistinguishable sets, while $L_2$ represents locally distinguishable sets. Within this framework, local sets like $\mathscr{U}_1$, which exhibit activation of strong nonlocality, are positioned at the leftmost part of $L_2$. On the other hand, strong local sets such as $\mathscr{U}_2$ occupy the rightmost part of $L_2$. It's important to note that the sets $\mathscr{S}_1$ and $\mathscr{S}_2$ may fall somewhere in between, but $\mathscr{S}_1$ must be to the left of $\mathscr{S}_2$.}
	\label{fig:Line}
\end{figure}

In this paper, we use PVMs as our optimal measurement tool because they provide many interesting results for building our conceptual framework. In contrast, when using the POVMs, we find a lack of compelling results to fully establish our hypothesis. Therefore, in the early stages of establishing our innovative idea, the use of this particular class of measurements proved more than adequate. Notably, we are discussing those quantum systems having at least one subsystem with dimension two. To support our whole work, we need to coin some interesting properties of the set of states from $2\otimes n$ quantum system.

\begin{lemma}---
	Any set of orthogonal product states in $2 \otimes n$ is completely distinguishable under LPCC.
\end{lemma}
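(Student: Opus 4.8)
The plan is to exploit the two-dimensionality of Alice's system together with the product structure, and to organise the protocol so that Bob moves first. Writing each state as $\ket{\psi_k}=\ket{a_k}_A\otimes\ket{b_k}_B$, orthogonality reads $\braket{a_k}{a_l}\braket{b_k}{b_l}=0$ for $k\ne l$, so for every pair at least one factor must vanish. The elementary observation I would record first is that whenever $\braket{b_k}{b_l}\ne 0$ we are forced to have $\braket{a_k}{a_l}=0$; since $\mathcal{H}_A$ is two-dimensional, $\ket{a_l}$ is then the unique direction orthogonal to $\ket{a_k}$, so such a ``linked'' pair pins down a full orthonormal basis of $\mathcal{H}_A$.

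First I would define a graph $H$ on the states, joining $k$ and $l$ exactly when $\braket{b_k}{b_l}\ne0$ (call these pairs \emph{linked}). Using the observation above I would show that each connected component of $H$ involves at most two Alice directions, which are moreover orthogonal: traversing any edge flips the Alice direction to its orthocomplement, and the orthocomplement of the orthocomplement returns the original, so along any path only the two directions $\ket{u}$ and $\ket{u^{\perp}}$ can occur. Hence a component splits into a part $P$ with Alice factor $\propto\ket{u}$ and a part $Q$ with Alice factor $\propto\ket{u^{\perp}}$; since states sharing an Alice direction must have orthogonal Bob parts, the Bob vectors inside $P$ form an orthonormal set, as do those inside $Q$.

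Next comes the crucial structural step, the one I expect to carry the most weight: showing that Bob can separate the components with a single non-destructive PVM. For states $k,l$ in different components we have $\braket{b_k}{b_l}=0$ (otherwise they would be linked, hence in the same component), so the Bob-subspaces $V_C=\mathrm{span}\{\ket{b_k}:k\in C\}$ are mutually orthogonal. I would then let Bob measure the PVM $\{P_{V_C}\}_C$ completed by the projector onto the orthogonal complement of $\bigoplus_C V_C$; because $\ket{b_k}\in V_C$ for $k\in C$, each state is left undisturbed and Bob simply learns its component. The point to verify carefully here is precisely the orthogonality of the $V_C$, which is what guarantees this is a genuine projective, orthogonality-preserving measurement rather than a destructive one.

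Finally I would close the recursion. A singleton component is already identified by Bob's first measurement. For a two-direction component, Alice measures in the orthonormal basis $\{\ket{u},\ket{u^{\perp}}\}$, which is orthogonality-preserving and sorts the surviving states into $P$ or $Q$; within each outcome the remaining Bob parts are orthonormal, so Bob finishes by measuring in a basis extending them. Every measurement used is a PVM supplemented by classical communication, so the whole procedure is LPCC, which completes the argument. I would also remark that the naive attempt to let Alice measure first fails in general, since distinct linked pairs can demand incompatible Alice bases (for instance $\{\ket{0},\ket{1}\}$ versus $\{\ket{+},\ket{-}\}$); routing Bob's subspace-sorting measurement to the front is exactly what removes this obstruction.
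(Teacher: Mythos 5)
Your proposal is correct and follows essentially the same route as the paper's proof: Bob first performs a non-destructive PVM onto the mutually orthogonal subspaces spanned by the Bob parts of each ``family'' of states, Alice then measures in the two-outcome basis $\{\ket{u},\ket{u^{\perp}}\}$ associated with that family, and Bob finishes with rank-1 projectors. The only difference is that you explicitly derive, via the linked-pair graph and its connected components, the canonical decomposition that the paper simply asserts at the outset, which makes your write-up somewhat more self-contained.
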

		
\begin{proof}
   A set of states in $\mathbb{C}^{2 \otimes n}$ can be written as 
   
   \begin{equation}
   	\begin{array}{c}
   		\ket{\phi_{ij}}=\ket{\alpha_i}\ket{\eta_{ij}}\\
   		\ket{\psi_{ik}}=\ket{\alpha_i^{\perp}}\ket{\kappa_{ik}}
   	\end{array}
   \end{equation}
   
   for some $i$, $j$, $k$. For each $i$, $\braket{\eta_{ij}}{\eta_{il}}_{j \neq l}=\braket{\kappa_{ik}}{\kappa_{im}}_{k \neq m}=0$ and for $i\neq h, \ket{\alpha_i} \not\propto \ket{\alpha_h},\braket{\alpha_i}{\alpha_h}\neq 0$ and $\text{span}\left\{\ket{\eta_{ij}},\ket{\kappa_{ik}}\right\}_{j,k} \perp \text{span}\left\{\ket{\eta_{hl}},\ket{\kappa_{hm}}\right\}_{l,m}$.
   
   The measurement strategy begins with Bob measuring in $P^B_i$, where for each $i$, $P^B_i$ is a rank-$d_i$ projective measurement defined as: $P^B_i=\sum_{r=1}^{d_i}\ket{\theta_{ir}}\bra{\theta_{ir}}$, where all $\left\{\ket{\theta_{ir}}\right\}$ satisfy $\text{span}\left\{\left\{\ket{\theta_{ir}}\right\}_{r=1}^{d_i}:\braket{\theta_{il}}{\theta_{im}}_{l\neq m}=0\right\}=\text{span}\left\{\ket{\eta_{ij}},\ket{\kappa_{ik}}\right\}_{j,k}$.
   
   If the outcome is $\mu$, then the postmeasurement states will be 
   
   \begin{equation}
   	\begin{array}{c}
   		\ket{\phi_{\mu j}}=\ket{\alpha_\mu}\ket{\eta_{\mu j}}\\
   		\ket{\psi_{\mu k}}=\ket{\alpha_{\mu}^{\perp}}\ket{\kappa_{\mu k}}
   	\end{array}
   \end{equation}
   
   Next Alice will measure in two rank-1 projective measurements:
   
   \begin{equation*}
   	\begin{array}{c}
   		P^B_{\mu 1}=\ket{\alpha_\mu}\bra{\alpha_\mu}\\
   		P^B_{\mu 2}=\ket{\alpha_\mu^{\perp}}\bra{\alpha_\mu^{\perp}}
   	\end{array}
   \end{equation*}
   
   If the outcome is 1, the postmeasurement states will be
   
   \begin{equation}
   	\begin{array}{c}
   		\ket{\phi_{\mu j}}=\ket{\alpha_\mu}\ket{\eta_{\mu j}}
   	\end{array}
   \end{equation}
   
   Finally Bob will measure again, but this time in rank-1 projective measurements, denoted as $P^B_{\mu 1 j}=\ket{\eta_{\mu j}}\bra{\eta_{\mu j}}$, which distinguish the states $\ket{\phi_{\mu j}}$.
   
   Similarly, if the outcome is 2 after Alice measures, the postmeasurement states will be
   
   \begin{equation}
   	\begin{array}{c}
   		\ket{\psi_{\mu k}}=\ket{\alpha_\mu^{\perp}}\ket{\kappa_{\mu k}}
   	\end{array}
   \end{equation}
   
   and these will be distinguished by Bob using the rank-1 projective measurements $P^B_{\mu 2 k}=\ket{\kappa_{\mu k}}\bra{\kappa_{\mu k}}$.
\end{proof}	

Furthermore, for such a local set if an observer applies an OPLM, the postmeasurement states remain product states. This is because POVM(PVM) cannot transform a product state into an entangled state. Therefore the set remains local, making activation impossible.

\begin{corollary}---
	It is impossible to activate the nonlocality of any set of orthogonal product states in $\mathbb{C}^{n \otimes 2}$ under LPCC as well as under LOCC.
\end{corollary}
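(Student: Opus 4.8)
The plan is to derive the Corollary as an almost immediate consequence of the Lemma, combined with the two facts emphasized in the paragraph just before the statement: that an OPLM preserves orthogonality by definition, and that neither LPCC nor LOCC can create entanglement out of a product state. First I would unpack what activation demands. Starting from a locally distinguishable orthogonal product set $\{\ket{\phi_i}\}$ in $\mathbb{C}^{n}\otimes\mathbb{C}^{2}$, activating nonlocality would mean exhibiting an OPLM whose post-measurement set is locally indistinguishable. Hence it suffices to prove the negation: for every OPLM and every measurement outcome, the post-measurement set is again locally distinguishable.

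Next I would track the structure of the post-measurement states under an arbitrary local operation. Writing a local Kraus operator as $A\otimes\mathbb{I}$ (when Alice measures) or $\mathbb{I}\otimes B$ (when Bob measures), each product state $\ket{a_i}\ket{b_i}$ is mapped to $(A\ket{a_i})\otimes\ket{b_i}$ or $\ket{a_i}\otimes(B\ket{b_i})$, which is again a product state; this is precisely the no-entanglement-creation fact. The states also remain mutually orthogonal because the measurement is assumed to be an OPLM. Finally, the dimension of the qubit subsystem can only be preserved or reduced, so the post-measurement states live in a bipartite space $\mathbb{C}^{d_A}\otimes\mathbb{C}^{d_B}$ with $d_B\le 2$ (and $d_A\le n$), which is exactly the regime covered by the Lemma.

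The key step is then to invoke the Lemma, after relabeling the two parties so that the dimension-two system plays the role of the qubit in the $2\otimes n$ statement: any orthogonal product set in a $d_A\otimes 2$ system is completely distinguishable by LPCC, and therefore also by LOCC since LPCC $\subset$ LOCC. Consequently each post-measurement set is local, so no OPLM can ever turn $\{\ket{\phi_i}\}$ into a locally indistinguishable set, and activation fails under both LPCC and LOCC.

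I expect the only delicate point to be bookkeeping rather than a genuine obstacle, namely verifying that a general, possibly multi-round, OPLM protocol in which the two parties measure in alternation and exchange classical messages never escapes the hypotheses of the Lemma. This reduces to checking that the three invariants are maintained after each individual round: mutual orthogonality, the product form of the states, and the bound $d_B\le 2$ on the qubit side. Each of these follows at once, since every round acts locally on a single subsystem, cannot create entanglement, and preserves orthogonality by the defining property of an OPLM; so the Lemma remains applicable at every stage and the conclusion holds.
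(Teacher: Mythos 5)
Your proposal is correct and follows essentially the same route as the paper: the authors likewise observe that an OPLM cannot create entanglement, so every post-measurement set remains an orthogonal product set in a system with one qubit side, and Lemma~1 then guarantees it is still LPCC- (hence LOCC-) distinguishable, so activation is impossible. Your extra bookkeeping about multi-round protocols only makes explicit what the paper leaves implicit.
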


Notably, it is essential to recognize that any set of orthogonal product states in $\mathbb{C}^{n \otimes 2}$ serves as an example of a strong local set from the activation perspective as mentioned earlier.

Next, we find out some circumstances where the parties with local dimension two have no power to activate nonlocality.

\begin{theorem}---
	Consider a tripartite quantum system $\mathcal{H}^n_{A}\otimes\mathcal{H}^2_{B}\otimes\mathcal{H}^2_{C}$ where a set of orthogonal states, which are product states in $A|BC$ partition, is given. The set is distinguishable under LPCC. If either Bob($B$) or Charlie($C$) performs a nontrivial PVM, the activation of nonlocality becomes impossible and can only be activated by Alice($A$).
\end{theorem}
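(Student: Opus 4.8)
The plan is to exploit the extreme rigidity of projective measurements on a two-dimensional system and then reduce everything to the already-established behaviour of orthogonal product sets in an $n\otimes 2$ system. Write the given states as $\ket{\Phi_i}=\ket{a_i}_A\otimes\ket{\psi_i}_{BC}$, where $\ket{a_i}\in\mathcal{H}^n_A$ and $\ket{\psi_i}\in\mathcal{H}^2_B\otimes\mathcal{H}^2_C$ (the $BC$ part being allowed to be entangled), and recall that within the activation framework every measurement is an OPLM, so orthogonality is preserved branch-by-branch. The first step is to observe that on a qubit a nontrivial PVM is very constrained: the projectors on $\mathcal{H}^2_B$ have rank $0$, $1$, or $2$, and a rank-$2$ projector is $\mathbb{I}_2$; hence any nontrivial PVM by Bob consists of exactly two rank-$1$ orthogonal projectors $\{\ketbra{b}{b},\ketbra{b^\perp}{b^\perp}\}$.

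Next I would track what such a rank-$1$ projection does to the (possibly entangled) $BC$ component. Expanding each $\ket{\psi_i}=\ket{b}_B\ket{c_i^0}_C+\ket{b^\perp}_B\ket{c_i^1}_C$ with $\ket{c_i^0},\ket{c_i^1}\in\mathcal{H}^2_C$, the outcome associated with $\ketbra{b}{b}$ collapses Bob's qubit to the fixed vector $\ket{b}$, leaving the post-measurement state $\ket{a_i}_A\ket{b}_B\ket{c_i^0}_C$. The crucial point is that Bob's register is now frozen to the same $\ket{b}$ for every surviving $i$, so any entanglement across the $B|\text{rest}$ cut is destroyed and the conditional set is genuinely tripartite-product. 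Discarding the redundant factor $\ket{b}_B$, the discrimination task reduces to distinguishing orthogonal product states $\ket{a_i}_A\ket{c_i^0}_C$ in the bipartite $n\otimes 2$ system $\mathcal{H}^n_A\otimes\mathcal{H}^2_C$; the same reasoning applies verbatim to the $\ket{b^\perp}$ branch.

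I would then close the argument by invoking the Lemma together with the Corollary: any set of orthogonal product states in $\mathbb{C}^{n\otimes 2}$ is LPCC-distinguishable, and in fact strong local, so its nonlocality can never be activated under LPCC (or LOCC). Thus in every branch of Bob's nontrivial PVM the post-measurement set is a strong local $n\otimes 2$ product set, which can never subsequently be turned into a nonlocal set by any further local operations. Consequently Bob's nontrivial measurement cannot initiate activation; by the symmetry of the hypothesis under the exchange $B\leftrightarrow C$, the identical conclusion holds for Charlie. Since activation must begin with some party performing a nontrivial measurement, the only remaining candidate is Alice, whose system has dimension $n$, which establishes the claim.

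The main obstacle I anticipate is handling the fact that the $BC$ part is allowed to be entangled: one must verify that a single rank-$1$ projection on one qubit fully separates that qubit from the remainder rather than leaving residual entanglement. This is exactly where the $\dim=2$ hypothesis does the work, since a nontrivial qubit PVM is forced to be rank-$1$ and hence freezes Bob's register completely. A secondary point to argue carefully is the logical step from ``the conditional set is distinguishable'' to ``activation is impossible'': this leans on the Corollary's assertion that $n\otimes 2$ product sets are strong local, guaranteeing that no amount of subsequent LPCC/LOCC can resurrect nonlocality once the set has fallen into this form.
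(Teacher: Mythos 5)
Your proposal is correct and follows essentially the same route as the paper: a nontrivial qubit PVM is forced to consist of two rank-$1$ projectors, each branch collapses the set to orthogonal product states in an effective $n\otimes 2$ system, and the Lemma/Corollary then rule out any subsequent activation, with the Charlie case by symmetry. The only difference is that you spell out the expansion of the $BC$ component and the destruction of $B|\text{rest}$ entanglement explicitly, which the paper leaves implicit.
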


\begin{proof}
	Suppose Bob goes first with PVM elements $\{P_l\}$. Since the dimension of the party Bob is two and $\sum P_l=\mathbb{I}$, Bob can only provide rank-$1$ projective measurements $P_1$ and $P_2$ as non-trivial measurements. For each outcome, the postmeasurement states will be product states in $\mathbb{C}^{n \otimes 1 \otimes 2}$. Consequently, it is impossible to activate nonlocality from here. If instead of Bob going first, Charlie goes first and perform non-trivial PVM, the postmeasurement states likewise become the product states in $\mathbb{C}^{n \otimes 2 \otimes 1}$ and therefore activation of nonlocality will also not occur in this case. Hence, if anyone can activate nonlocality, it will only be Alice.
\end{proof}

It is evident that any product state in a multipartite quantum system remains a product state across all partitions. Therefore, the above result also holds for the sets consisting of product states.
\begin{corollary}---
Consider a tripartite quantum system $\mathcal{H}^n_{A}\otimes\mathcal{H}^2_{B}\otimes\mathcal{H}^2_{C}$ where a set of orthogonal product states is distinguishable under LPCC. If either Bob($B$) or Charlie($C$) performs a nontrivial PVM, the activation of nonlocality becomes impossible and can only be activated by Alice($A$).
\end{corollary}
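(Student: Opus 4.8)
The plan is to recognize this Corollary as an immediate specialization of the preceding Theorem, so that essentially all of the measurement-theoretic content has already been established. The only new ingredient I need is a set-theoretic containment between the two families of states: every \emph{fully} product state is, in particular, product across the $A|BC$ bipartition. Concretely, writing a product state of the tripartite system as $\ket{\phi_i}=\ket{a_i}_A\otimes\ket{b_i}_B\otimes\ket{c_i}_C$, I simply group the last two tensor factors to obtain $\ket{\phi_i}=\ket{a_i}_A\otimes\bigl(\ket{b_i}_B\otimes\ket{c_i}_C\bigr)$. This exhibits $\ket{\phi_i}$ as a product state in the $A|BC$ cut whose $BC$-component $\ket{\psi_i}_{BC}=\ket{b_i}_B\otimes\ket{c_i}_C$ happens to be unentangled, i.e.\ a biseparable state of the type required by the Theorem.

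Consequently, a set of orthogonal fully product states that is distinguishable by LPCC is literally the same set of states, viewed under the coarser $A|BC$ partition, and therefore a special case of a set of orthogonal biseparable states (product in the $A|BC$ bipartition) distinguishable by LPCC. This is exactly the hypothesis of the Theorem, and since the set of states is unchanged under the embedding, the LPCC-distinguishability assumption transfers verbatim. I would then invoke the Theorem directly to conclude: if either Bob or Charlie performs a nontrivial PVM, the post-measurement states remain product (OPLM cannot create entanglement), hence effectively live in $\mathbb{C}^{n\otimes 1\otimes 2}$ or $\mathbb{C}^{n\otimes 2\otimes 1}$, and the nonlocality cannot be activated from there; only Alice retains the power to activate it.

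As an independent check I could reproduce the conclusion without appealing to the Theorem, mirroring its proof. Since $\dim\mathcal{H}_B=2$, any nontrivial PVM of Bob is a pair of rank-$1$ projectors $P_1=\ketbra{v}{v}$, $P_2=\ketbra{v^\perp}{v^\perp}$; each outcome collapses Bob's register onto a fixed one-dimensional subspace, so every post-measurement state $\ket{a_i}_A\,\braket{v}{b_i}\,\ket{v}_B\,\ket{c_i}_C$ is again fully product with its $B$-part frozen. After discarding the trivial factor the set reduces to orthogonal product states in $\mathbb{C}^{n\otimes 2}$, whose nonlocality cannot be activated by the Corollary following the Lemma, and the argument for Charlie is symmetric. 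I expect no genuine obstacle here: the containment is a one-line observation and the Theorem already supplies the substance; the only point warranting brief care is verifying that viewing the \emph{same} states under a coarser partition preserves both orthogonality and LPCC-distinguishability, which is immediate.
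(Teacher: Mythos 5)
Your proposal is correct and matches the paper's own justification, which likewise observes that any fully product state is in particular biseparable (product across the $A|BC$ cut) and then invokes the preceding Theorem directly. The additional independent check you sketch simply retraces the Theorem's proof and adds nothing beyond the paper's one-line reduction.
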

In the following part, we present the sets previously referred to as $\mathscr{S}_1$. These sets are categorized as TYPE-I Activable sets.
\subsection{TYPE-I ACTIVABLE SET}
The information encoded in such sets is fully accessible initially to all involved parties. However, remarkably, one of them has the capability to hide the information completely.
\begin{example}
    $S_1 \subset \mathbb{C}^{3 \otimes 2 \otimes 3}:$
\end{example}

The following set $S_1$[\ref{S_1}] of nine pure states in $\mathbb{C}^{3 \otimes 2 \otimes 3}$ can be distinguished under LPCC[\ref{distinguishability_S_1}].

\begin{equation}
\label{S_1}
	\begin{array}{l}
		\ket{\phi^1_{1}} \equiv \ket{0}\ket{00+01+10-11}\\
		\ket{\phi^1_{2}} \equiv \ket{0}\ket{00-01-10-11}\\
		\ket{\phi^1_{3}} \equiv \ket{1}\ket{01-11}\\
		\ket{\phi^1_{4}} \equiv \ket{2}\ket{01+02+11-12}\\
		\ket{\phi^1_{5}} \equiv \ket{2}\ket{01-02-11-12}\\
		\ket{\phi^1_{6,7}} \equiv \ket{0 \pm 1}\ket{02-12}\\
		\ket{\phi^1_{8,9}} \equiv \ket{1 \pm 2}\ket{00-10}
	\end{array}
\end{equation}

\begin{figure}[htp]
	\centering
	\includegraphics[scale=.29]{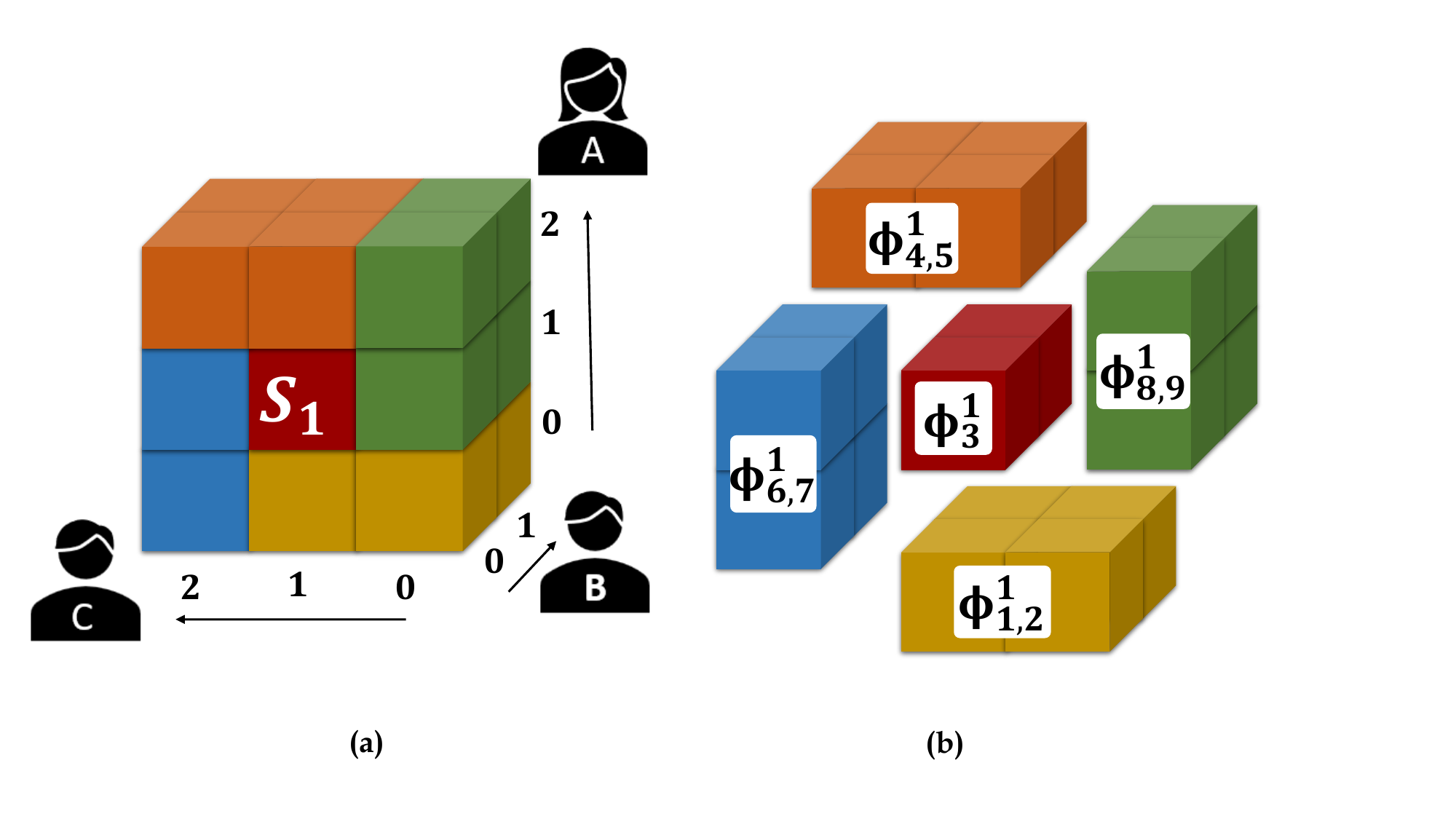}
	\caption{\textbf{(a)} The block structure for $\mathbb{C}^{3 \otimes 2 \otimes 3}$. An arbitrary sub-block made of $n$ basic blocks can contain at most $n$ mutually orthogonal states. \textbf{(b)} In this figure, all the sub-blocks are shown. The states of $S_1$ are placed in the corresponding sub-blocks.}
	\label{fig:S1}
\end{figure}

\begin{theorem}---
	It is possible to activate nonlocality in $S_1$ under LPCC.
\end{theorem}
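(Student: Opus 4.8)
The plan is to activate the nonlocality of $S_1$ by producing an explicit orthogonality-preserving PVM, carried out by a single party, after which the post-measurement set is locally indistinguishable; by the discussion preceding Definition~5 this is exactly what ``activation of hidden nonlocality under LPCC'' requires. The natural activating party is Alice. Indeed, a nontrivial PVM by Bob is (as in the proof of the preceding theorem) necessarily rank-$1$ on his qubit, so it replaces his register by a fixed pure state and thereby turns the entangled $BC$-components of $S_1$ into products; the resulting fully product states are then distinguishable, so Bob cannot activate. This isolates the problem to constructing a nontrivial PVM $\{P_0,P_1,P_2\}$ on $\mathcal{H}^3_A$ that is orthogonality preserving on $S_1$ and renders every outcome's post-measurement set locally indistinguishable.

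First I would fix Alice's measurement basis $\{\ket{e_0},\ket{e_1},\ket{e_2}\}$, reading off the constraints from the block structure of Fig.~\ref{fig:S1}. The delicate pairs are $\ket{\phi^1_{6,7}}$ and $\ket{\phi^1_{8,9}}$: each pair shares an identical entangled $BC$-component and owes its orthogonality entirely to the $A$-register, so the basis must be arranged so that, in every outcome, at most one member of each such pair has nonzero projection (the other being eliminated, which is permitted for an OPLM), while each pair that is orthogonal only through $A$ but retained must stay orthogonal. Every pair already orthogonal on $BC$ is preserved automatically. Verifying orthogonality preservation then reduces to checking $\bra{\phi^1_l}(P_\mu)_A\ket{\phi^1_k}=0$ for the retained pairs in each outcome, a modular computation since blocks with disjoint $C$-supports cannot interfere.

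The substantive step is to show that each outcome is locally indistinguishable. After a rank-$1$ projection onto $\ket{e_\mu}$ the surviving states all take the form $\ket{e_\mu}_A\otimes\ket{\chi}_{BC}$, so Alice's register carries no distinguishing information and indistinguishability of the outcome set is equivalent to nonlocality of the induced set $\{\ket{\chi}\}$ in $B\otimes C=\mathbb{C}^{2\otimes 3}$. Crucially the basis is chosen so that these $BC$-components remain entangled across $B|C$; this is precisely what places the induced set outside the scope of the Lemma, which guarantees LPCC-distinguishability only for \emph{product} states in $2\otimes n$. I would then prove indistinguishability by establishing local irreducibility: writing a generic Hermitian OPLM element for Bob (necessarily rank-$1$ on the qubit) and for Charlie, imposing that orthogonality among the induced states is preserved pairwise, and deducing that all off-diagonal entries vanish and all diagonal entries coincide, so the only admissible OPLM is trivial.

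I expect this last reducibility analysis to be the main obstacle. Because the induced states are genuinely entangled in $B|C$, the orthogonality-preservation constraints couple Bob's and Charlie's parameters, and one must use the entanglement to force the cross terms to vanish; the bookkeeping that drives every measurement element to a multiple of the identity is the delicate part, whereas exhibiting Alice's PVM and checking orthogonality preservation are comparatively mechanical. Finally, since no subsystem of $S_1$ can be discarded without destroying orthogonality, the set is free of local redundancy (Definition~5), so the activation obtained this way is \emph{genuine}.
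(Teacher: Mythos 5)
Your proposal goes wrong at the very first step, where you rule out Bob as the activating party. You argue that Bob's nontrivial PVM is necessarily rank-$1$ on his qubit, that it collapses the entangled $BC$-components of $S_1$ into products, and that ``the resulting fully product states are then distinguishable, so Bob cannot activate.'' The last inference is false: orthogonal product states are not in general locally distinguishable. Lemma~1 and Theorem~1 of the paper only guarantee distinguishability when the effective post-measurement system is $2\otimes n$ (respectively $n\otimes 2\otimes 2$ collapsing to $n\otimes 2$); here Charlie's dimension is $3$, so after Bob's rank-$1$ projection the surviving states live in an effective $3\otimes 3$ system ($A$ versus $C$), which is exactly the regime where nonlocality without entanglement occurs. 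Indeed, the paper's proof is precisely the route you discarded: Bob measures $M_0=\ket{0}\bra{0}$, $M_1=\ket{1}\bra{1}$, and for either outcome the nine surviving states, rewritten in the appropriate $3$-dimensional basis of Charlie's space, are the Bennett et al.\ nine-state nonlocal product basis of $3\otimes 3$, which is locally indistinguishable. That single observation finishes the theorem.

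The alternative you propose --- an Alice-side PVM whose outcomes keep the $BC$-components entangled, followed by a local-irreducibility analysis --- is never actually carried out: you do not exhibit the basis $\{\ket{e_\mu}\}$, you do not verify orthogonality preservation on the pairs $\ket{\phi^1_{6,7}}$ and $\ket{\phi^1_{8,9}}$ (whose orthogonality lives entirely in Alice's register, so a rank-$1$ projector there must eliminate one member of each pair or annihilate a cross term), and you explicitly defer the irreducibility bookkeeping that you yourself identify as the main obstacle. So the proposal both eliminates the correct (and much simpler) argument on the basis of a false premise and leaves its substitute argument as an unexecuted plan; as it stands it does not prove the theorem.
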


Activating nonlocality in $S_1$ is very straightforward when Bob initiates the process and measures in $M_0=\ket{0}\bra{0}$ and $M_1=\ket{1}\bra{1}$. Upon examining each measurement outcome, we observe that the states in the A|BC partition are indistinguishable.  Further mathematical details are provided in Appendix-B[\ref{theorem_2}]. However, here we have included a pictorial representation[\ref{fig:S1Activation}] of the proof.

\begin{figure}[htp]
	\centering
	\includegraphics[scale=.3]{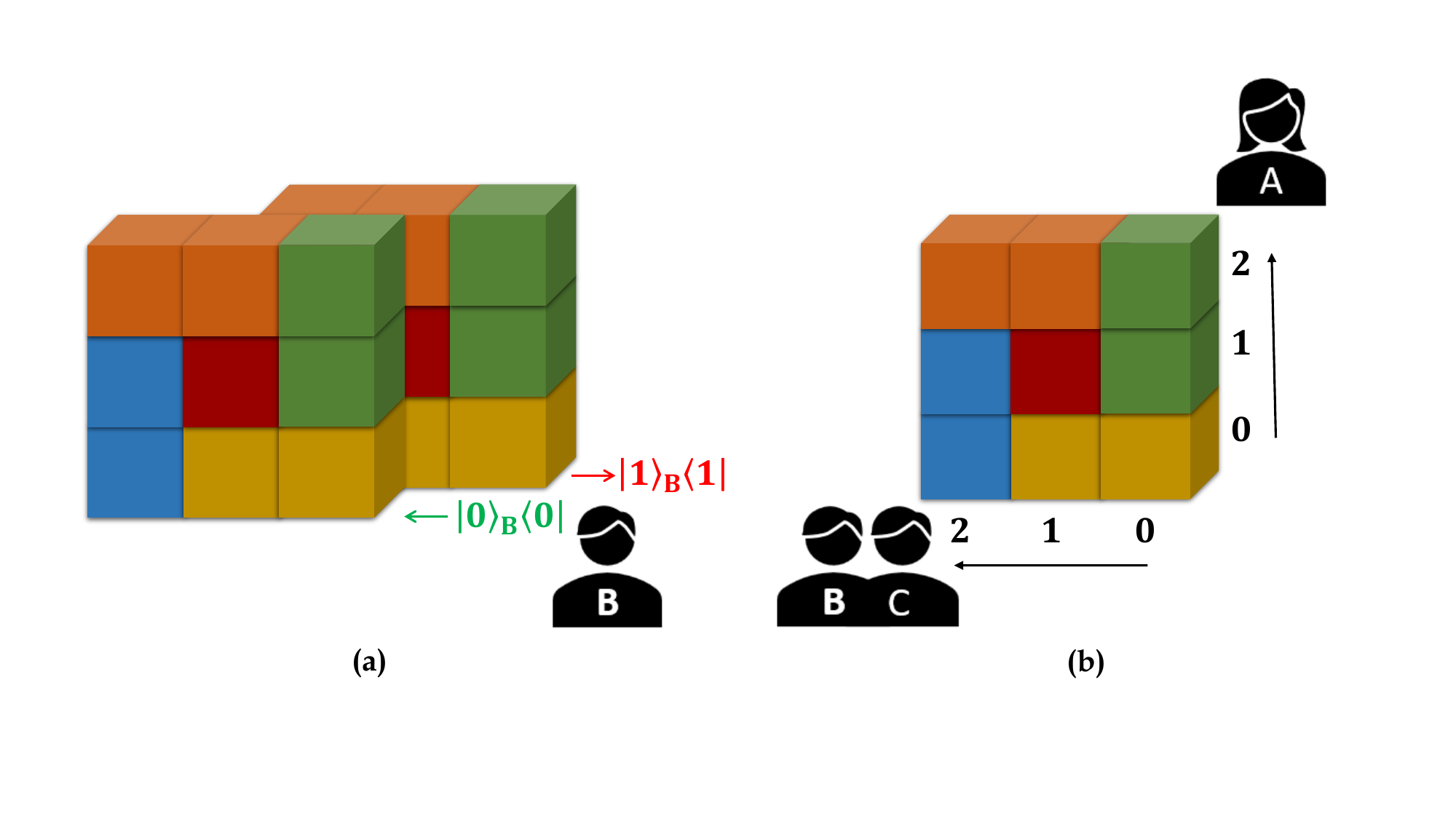}
	\caption{Pictorial representation of hidden nonlocality activation of $S_1$. Here Bob's measurement setup is $M_0=\ket{0}\bra{0}$ and $M_1=\ket{1}\bra{1}$. \textbf{(a)} The `$0$' outcome will isolate the front face of the block structure of $S_1$, while the `$1$' outcome will isolate its back face. In both cases, we end up on the nonlocal product basis proposed by Bennett et al. in 1999\cite{BennettPB1999}. \textbf{(b)} This figure represents the block structure of the aforementioned Bennett's set.}
	\label{fig:S1Activation}
\end{figure}

\begin{example}
    $S_{1,m} \subset \mathbb{C}^{(2m+1) \otimes 2 \otimes (2m+1)}:$
\end{example}

An $(2m+1) \otimes 2 \otimes (2m+1)$ generalization of $S_1$ is as follows:

\begin{equation}
	\begin{array}{l}
\vspace{5pt}\ket{\xi_1}=\ket{m}\ket{0-1}\ket{m}\\
		\text{For } i=0,1, \ldots, m-1, \text{ and } k=0, 1,\ldots, m-i-1,\vspace{3pt}\\
		\ket{\xi^1_{i,k}}=\ket{i}\left(\ket{0+1}\ket{i+2k}+\ket{0-1}\ket{i+2k+1}\right),\\
		\ket{\xi^2_{i,k}}=\ket{i}\left(\ket{0-1}\ket{i+2k}-\ket{0+1}\ket{i+2k+1}\right),\\
		\ket{\xi^3_{i,k}}=\ket{2m-i}\left(\ket{0+1}\ket{i+2k+1}+\ket{0-1}\ket{i+2k+2}\right),\\
		\ket{\xi^4_{i,k}}=\ket{2m-i}\left(\ket{0-1}\ket{i+2k+1}-\ket{0+1}\ket{i+2k+2}\right),\\
		\ket{\xi^5_{i,k}}=\left(\ket{i+2k+1}\pm\ket{i+2k+2}\right)\ket{0-1}\ket{i},\\
		\ket{\xi^6_{i,k}}=\left(\ket{i+2k}\pm\ket{i+2k+1}\right)\ket{0-1}\ket{m-i}
	\end{array}
\end{equation}

The block structure of $S_{1,m}$ is given in FIG.[\ref{fig:Generalization}] in the $A|BC$ partition. It is easy to check that for $m=1$ the above set coincides with our first example $S_1$, i.e., $S_{1,1} \equiv S_1$.\\

In the next part, we present some sets such as $\mathscr{S}_2$, which are classified as TYPE-II Activable sets.
\subsection{TYPE-II ACTIVABLE SET}
For such sets, the information is fully accessible to all involved parties initially, similar to type-I sets. However, in this scenario, no single party has the individual capability to hide the information completely. To achieve hiding of the information, at least two parties must perform a joint PVM.
\begin{example}
    $S_2 \subset \mathbb{C}^{3 \otimes 2 \otimes 3}:$
\end{example}

The following set $S_2$[\ref{S_2}] of nine pure states in $\mathbb{C}^{3 \otimes 2 \otimes 3}$ can be distinguished under LPCC[\ref{distinguishability_S_2}].

\begin{figure}[htp]
    \centering
    \includegraphics[scale=.29]{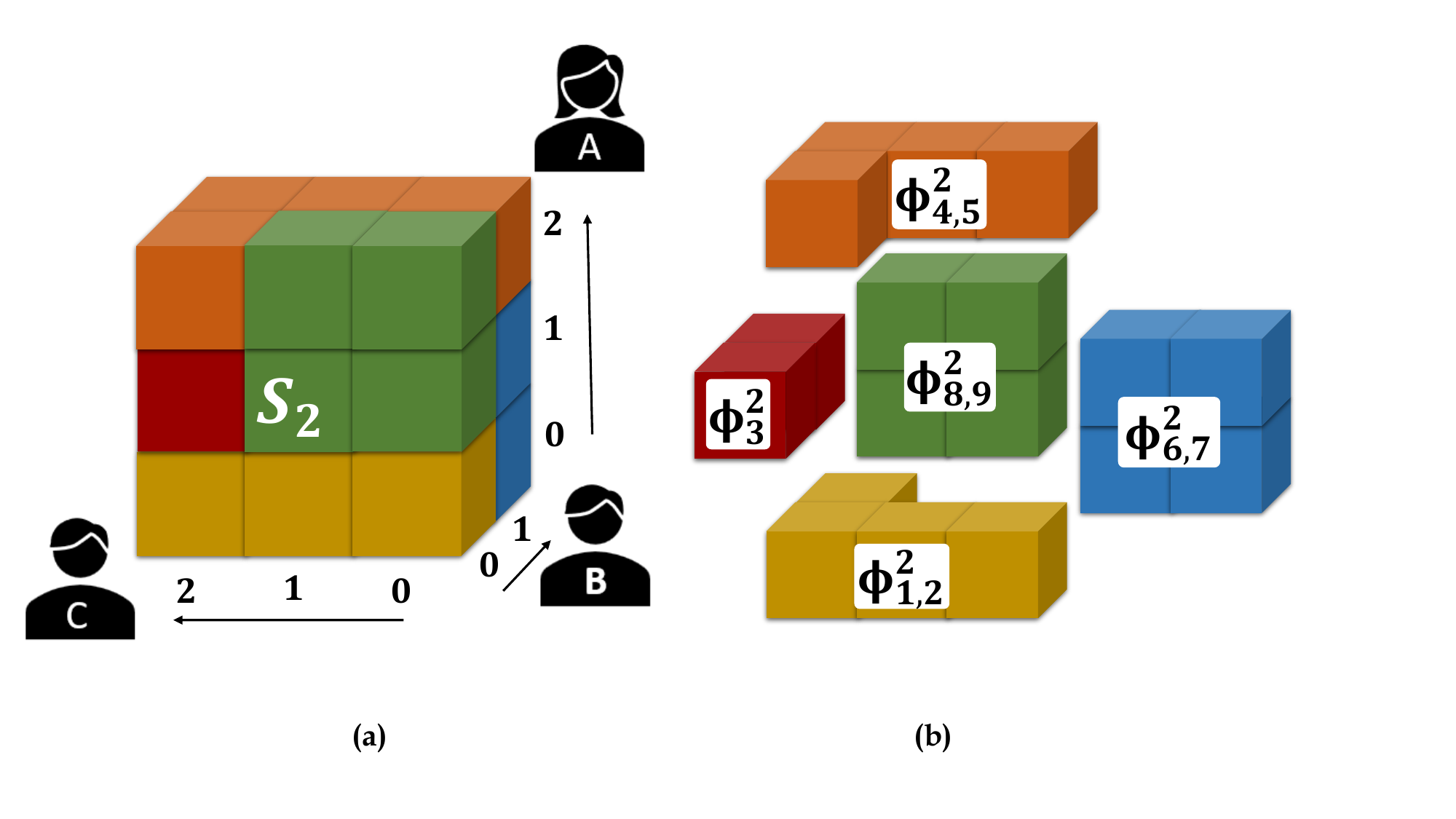}
     \caption{The block structure for $\mathbb{C}^{3 \otimes 2 \otimes 3}$. The states of $S_2$ are placed in the corresponding sub-blocks.}
    \label{fig:S2}
\end{figure}

\begin{equation}
\label{S_2}
    \begin{array}{l}
        \ket{\phi^2_{1}} \equiv \ket{0}\ket{00+01+02-12}\\
        \ket{\phi^2_{2}} \equiv \ket{0}\ket{00-01-02-12}\\
        \ket{\phi^2_{3}} \equiv \ket{1}\ket{02-12}\\
        \ket{\phi^2_{4}} \equiv \ket{2}\ket{10+11+12-02}\\
        \ket{\phi^2_{5}} \equiv \ket{2}\ket{10-11-12-02}\\
        \ket{\phi^2_{6,7}} \equiv \ket{0 \pm 1}\ket{10-11}\\
        \ket{\phi^2_{8,9}} \equiv \ket{1 \pm 2}\ket{00-01}
    \end{array}
\end{equation}
The next two theorems distinguish the TYPE-II class from the TYPE-I class.
\begin{theorem}---
	Nonlocality can never be activated in $S_2$ under LPCC.
\end{theorem}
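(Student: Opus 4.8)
The plan is to show that no matter which single party measures first using a nontrivial PVM, every post-measurement set of states remains distinguishable by LPCC, so that local accessibility of information is never destroyed. Since activation requires that \emph{every} outcome branch yields a locally indistinguishable set, it suffices to exhibit, for each party and each admissible nontrivial measurement, at least one reduction that stays distinguishable; more strongly, I would argue that all branches stay distinguishable.

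First I would dispose of Bob. Since $\dim \mathcal{H}_B = 2$, the only nontrivial PVM Bob can perform consists of two rank-$1$ projectors onto some orthonormal basis $\{\ket{b},\ket{b^\perp}\}$ of $\mathbb{C}^2$. I would apply Theorem~1 (or its Corollary) directly: after Bob's rank-$1$ projection each outcome collapses Bob's register to a fixed one-dimensional state, so the surviving states live in $\mathbb{C}^{3\otimes 1\otimes 3}$ and are therefore product states across the remaining parties. By Lemma~1 together with the fact that OPLM cannot create entanglement, such a set is distinguishable, so Bob cannot activate nonlocality. This handles one of the three parties almost for free.

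The substantive work concerns Alice and Charlie, each of local dimension three, where a nontrivial PVM need not be rank-$1$ and can genuinely entangle the branches. Here I would use the explicit block structure of $S_2$ in the $A|BC$ bipartition depicted in FIG.~\ref{fig:S2}. The key structural feature I would exploit is that $S_2$ consists of states that are product in the $A|BC$ cut, so any PVM by Alice acts only on her three-dimensional factor; I would classify Alice's orthogonality-preserving PVMs (rank-$1{+}1{+}1$, rank-$1{+}2$, and the trivial rank-$3$) and check, block by block, that each orthogonality-preserving outcome leaves a set that is still a union of mutually orthogonal product states distinguishable by the LPCC protocol used in the original distinguishability argument [\ref{distinguishability_S_2}]. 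The analogous and symmetric check must be carried out for Charlie. The guiding principle is that, unlike $S_1$, the placement of the states of $S_2$ in their sub-blocks is arranged so that no single party's projection can simultaneously create the Bennett-type nonlocal pattern in all outcome branches.

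The main obstacle I anticipate is the casework for Alice and Charlie: because their factors are three-dimensional, an orthogonality-preserving PVM has a continuum of possibilities (any nontrivial projective splitting of $\mathbb{C}^3$ that respects the orthogonality relations among the six distinct transverse vectors appearing in the states). Rather than parametrize all such measurements, I would reduce the problem using the OPLM constraint: an orthogonality-preserving measurement for Alice must send each pair of states that differ only in Alice's register to outcomes that keep them separable, which sharply restricts the allowed projectors to those compatible with the support structure of $S_2$. I expect to show that every admissible Alice-PVM either is effectively a coarse-graining that keeps the $A$-labels $\{0\},\{1\},\{2\}$ distinguishable (so the branches inherit distinguishability from the original protocol), or else eliminates states in a way that leaves an orthogonal product set in a lower-dimensional system, which is distinguishable by Lemma~1. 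Verifying that these are the only possibilities, and that Charlie's case is genuinely symmetric under the $A\leftrightarrow C$ relabeling of the block structure, is the crux of the argument; the remaining algebra is routine orthogonality bookkeeping of the form already carried out in the three-state distinguishability proof.
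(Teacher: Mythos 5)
Your treatment of Bob contains a step that fails. You argue that after Bob's rank-$1$ projection the surviving states live in $\mathbb{C}^{3\otimes 1\otimes 3}$, are product across the remaining parties, and are ``therefore'' distinguishable. Orthogonal product states in $3\otimes 3$ are not automatically distinguishable --- that is the entire content of nonlocality without entanglement, and it is exactly the mechanism by which the paper activates $S_1$: Bob's rank-$1$ PVM $\{\ket{0}\bra{0},\ket{1}\bra{1}\}$ collapses $S_1$ onto the Bennett nonlocal product basis in each branch. Lemma~1 covers only $2\otimes n$, and Theorem~1 requires \emph{both} Bob and Charlie to have local dimension two, which is not the case here (Charlie is three-dimensional), so neither result applies where you invoke them. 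The correct reason Bob is harmless for $S_2$ is different and must be proved: no rank-$1$ projector $\ket{\theta}\bra{\theta}$ on Bob's qubit preserves the orthogonality of $S_2$ at all (one is forced to $\braket{\theta}{0}=\braket{\theta}{1}=0$, e.g.\ from the pairs $(\ket{\phi^2_2},\ket{\phi^2_6})$ and $(\ket{\phi^2_1},\ket{\phi^2_2})$), so Bob has no nontrivial OPLM whatsoever. The same turns out to be true of Alice, which disposes of her case entirely and makes the block-by-block survey of her PVMs that you outline unnecessary.

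Your claim that Charlie's case is ``genuinely symmetric'' to Alice's under an $A\leftrightarrow C$ relabeling is also false, and it hides the only hard part of the proof. $S_2$ is not symmetric in $A$ and $C$: Alice admits no nontrivial OPLM, whereas Charlie admits several (e.g.\ the PVM $\{\ket{0-1}\bra{0-1},\ket{0+1}\bra{0+1},\ket{2}\bra{2}\}$ and various rank-$1$ plus rank-$2$ splittings). For each of these one must verify not only that every outcome branch remains LPCC-distinguishable but also that no \emph{subsequent} measurement can activate the branch; your sufficiency criterion (``at least one reduction stays distinguishable'') blocks only a single round of a protocol, not a multi-round one. The paper's appendix carries out precisely this recursion for Charlie's rank-$2$ outcomes, reducing each branch to biseparable sets in $\mathbb{C}^{3\otimes 2\otimes 2}$, where Theorem~1 then legitimately applies. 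Without the no-nontrivial-OPLM arguments for Alice and Bob, and without the explicit, asymmetric exhaustion of Charlie's measurements including follow-up rounds, the proposal does not establish the theorem.
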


\begin{proof}	
First, we will show that neither Alice nor Bob can perform a nontrivial OPLM under LPCC when they go first.

Suppose Alice goes first with PVM elements $\{P_l\}$. Since the dimension of the party Alice is three and $\sum P_l=\mathbb{I}$, there must exist a rank-1 projective measurement operator, which can be written as $P_\theta=\ket{\theta}\bra{\theta}$. If $\braket{\theta}{0}\neq 0$, then $(P_\theta\otimes \mathbb{I} \otimes \mathbb{I})\ket{\phi^2_1}=\braket{\theta}{0}\ket{\theta}\ket{00+01+02-12}$. To maintain orthogonality between $\ket{\phi^2_1}$, $\ket{\phi^2_3}$ and $\ket{\phi^2_4}$, $\ket{\theta}$ must be orthogonal to $\ket{1}$ and $\ket{2}$, i.e., $\braket{\theta}{1}=\braket{\theta}{2}=0$. Therefore, $\ket{\theta}=\ket{0}$. However, $\ket{0}\bra{0}$ can not keep $\ket{\phi^2_6}$ and $\ket{\phi^2_7}$ orthogonal. So $\braket{\theta}{0}= 0$. Similarly, $\braket{\theta}{1}=\braket{\theta}{2}=0$. Thus, Alice cannot provide a nontrivial OPLM under LPCC.
	
	If Bob goes first instead of Alice, he can only perform rank-1 projective measurement $P_\theta=\ket{\theta}\bra{\theta}$ as a nontrivial measurement because Bob's subsystem has dimension two. If $\braket{\theta}{0}\neq 0$, then $(\mathbb{I} \otimes P_\theta\otimes  \mathbb{I})\ket{\phi^2_2}=\ket{0}\ket{\theta}(\braket{\theta}{0}\ket{0-1}-\braket{\theta}{0+1}\ket{2})$. To keep orthogonality between $\ket{\phi^2_2}$ and $\ket{\phi^2_6}$, $\ket{\theta}$ must be orthogonal to $\ket{1}$, hence $\ket{\theta}=\ket{0}$. However, $\ket{0}\bra{0}$ can not keep orthogonality between $\ket{\phi^2_1}$ and $\ket{\phi^2_2}$. So $\braket{\theta}{0}= 0$. Similarly one can check $\braket{\theta}{1}=0$. So it is also impossible for Bob to perform a nontrivial OPLM under LPCC.
	
	Now, a comprehensive analysis of all possible OPLMs measurable by Charlie under LPCC is provided in Appendix-D[\ref{theorem_3}]. In a parallel fashion, we show that nontrivial measurements are not possible within Charlie's subsystem.
	
	Therefore all the above studies together indicate that activation of nonlocality is impossible in $S_2$ by LPCC.
\end{proof}

Next, if it can be further proven that at least two parties can perform joint measurements to activate nonlocality in $S_2$, then it can be stated that $S_2$ falls under the more local category than a TYPE-I set, as discussed earlier. The following theorem accomplishes this for $S_2$.

.
\begin{theorem}---
	If Bob and Charlie are allowed to perform joint PVMs, nonlocality can be activated in $S_2$ under the LPCC scenario.
\end{theorem}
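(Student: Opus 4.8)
The plan is to treat Bob and Charlie as a single composite party $BC$ of local dimension $6$, so that the operation now permitted is a projective measurement $\{Q_\mu\}$ acting on $\mathbb{C}^2_B\otimes\mathbb{C}^3_C$, and to work in the coarse bipartition $A|BC$ of dimensions $3\times 6$. First I would exhibit one explicit \emph{entangled} PVM $\{Q_\mu\}$ on this $\mathbb{C}^6$, where ``entangled'' means that no $Q_\mu$ factorises as a $B$-local projector times a $C$-local projector. This jointness is forced: the preceding theorem already shows that every $B$-local or $C$-local (indeed every single-party) OPLM on $S_2$ is trivial, so any measurement capable of activating nonlocality must genuinely couple $B$ and $C$. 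The design principle mirrors the activation of $S_1$, where Bob's $\{\ket{0}\bra{0},\ket{1}\bra{1}\}$ isolated the two faces of the block structure and collapsed each onto the Bennett \emph{et al.} nonlocal product basis in the $A|C$ cut; here the ``face-isolating'' projectors must instead act jointly on $BC$, because the $B|C$ tiling of $S_2$ is twisted relative to $S_1$. One checks, for instance, that the relevant $BC$-overlaps differ between the two sets, $\braket{02-12}{10+11+12-02}=-2$ for $S_2$ against $0$ for the matched pair in $S_1$, so no $BC$-unitary carries $S_2$ to $S_1$ and the measurement has to be built from scratch.

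The key mechanism is that $\mathbb{I}_A\otimes Q_\mu$ acts only on $BC$, so it sends each $\ket{a}_A\otimes\ket{w}_{BC}$ to $\ket{a}_A\otimes(Q_\mu\ket{w})_{BC}$; every post-measurement state therefore remains a \emph{product} state across the $A|BC$ cut. For each outcome $\mu$ I would compute the projected states $\{(\mathbb{I}_A\otimes Q_\mu)\ket{\phi^2_k}\}$, discarding null results and renormalising, verify that they stay mutually orthogonal so that $\{Q_\mu\}$ is an OPM in the $A|BC$ cut, and then certify that the resulting set is nonlocal. The cleanest certification is to choose $Q_\mu$ so that each outcome's set is, up to local unitaries on $A$ and on $BC$, a copy of the Bennett domino basis living inside $A\otimes V_\mu$ for some three-dimensional \emph{entangled} subspace $V_\mu\subseteq\mathbb{C}^6_{BC}$. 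Since all states of that outcome are supported on $A\otimes V_\mu$, distinguishing them by LOCC across $A|BC$ reduces to distinguishing the Bennett states across $A|V_\mu\cong\mathbb{C}^3\otimes\mathbb{C}^3$, which is impossible; hence each outcome's set is indistinguishable already in the coarse cut, and a fortiori for the separated parties $A,B,C$. An equally valid alternative, should the Bennett reduction prove awkward, is to certify nonlocality of each outcome's set directly, by rerunning the single-party argument of the previous theorem to show that no party can then apply a nontrivial OPLM, i.e. local irreducibility.

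Genuineness requires no extra work: the local dimensions $3,2,3$ are all prime, so by Definition 5(b) local redundancy is impossible and the activation is automatically genuine. Together with the preceding theorem, which rules out single-party activation, this shows that $S_2$ carries genuine hidden nonlocality unlockable only by a joint $BC$ measurement, placing $S_2$ strictly to the right of the TYPE-I sets yet strictly to the left of the strong-local sets on the locality line of FIG.[\ref{fig:Line}].

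The hard part will be the construction itself: pinning down a single entangled PVM $\{Q_\mu\}$ for which \emph{every} outcome simultaneously (i) preserves the orthogonality of all nine states and (ii) produces a certifiably locally indistinguishable post-measurement set. A naive guess such as $V_0=\operatorname{span}\{\ket{00},\ket{01},\tfrac{1}{\sqrt2}(\ket{02}-\ket{12})\}$ already fails to reproduce the Bennett pattern on one face, which illustrates that the twist in the tiling of $S_2$ must be matched precisely by the twist in the chosen subspaces $V_\mu$. Once a valid $\{Q_\mu\}$ is fixed, the remaining orthogonality checks and the identification with the Bennett structure (or the irreducibility argument) are routine, but must be carried out for each outcome.
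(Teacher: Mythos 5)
Your strategy is exactly the paper's: coarse-grain to the $A|BC$ bipartition, apply a two-outcome PVM on $\mathbb{C}^6_{BC}$ that cannot be realised by independent $B$- and $C$-local projective measurements, and show that each outcome collapses $S_2$ onto a copy of the Bennett et al.\ nonlocal product basis in $3\otimes 3$, whence indistinguishability in the coarse cut implies it a fortiori for the separated parties. However, you have deferred the one step that constitutes the entire content of the proof -- the explicit measurement -- and you say yourself that this is ``the hard part.'' As it stands the argument is a proof template, not a proof: nothing you have written certifies that a PVM with properties (i) and (ii) exists for $S_2$, and existence is precisely what the theorem asserts.

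For the record, the paper's choice is $M_1=\ket{00}\bra{00}+\ket{02}\bra{02}+\ket{11}\bra{11}$ and $M_2=\ket{01}\bra{01}+\ket{10}\bra{10}+\ket{12}\bra{12}$; outcome $1$ sends the nine states onto the Bennett basis written in the relabelled basis $\{\ket{00},\ket{02},\ket{11}\}\to\{\ket{0},\ket{1},\ket{2}\}$ of $V_1$, and outcome $2$ does the same with $\{\ket{01},\ket{12},\ket{10}\}$. This shows that your design heuristic was pointed in a slightly wrong direction: the subspaces $V_\mu$ are spanned by computational product vectors, so no entangled subspace (in the sense of containing no product states) and no superposed vector such as your trial $\tfrac{1}{\sqrt2}(\ket{02}-\ket{12})$ is needed. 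What makes the measurement genuinely joint is purely combinatorial -- the partition $\{00,02,11\}\,/\,\{01,10,12\}$ of the six basis states is not a product partition, so the instrument $\rho\mapsto M_\mu\rho M_\mu$ cannot be implemented by local PVMs without further collapsing the states. Your surrounding remarks (orthogonality preservation, reduction to the Bennett set in $A\otimes V_\mu$, genuineness from prime local dimensions) are all correct and match the paper once the measurement is in hand.
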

The detailed proof is provided in Appendix-E[\ref{theorem_4}]; however, we have also included a pictorial representation[\ref{fig:S2Unfold},\ref{fig:S2Activation}] below.

\begin{figure}[htp]
    \centering
    \includegraphics[scale=.38]{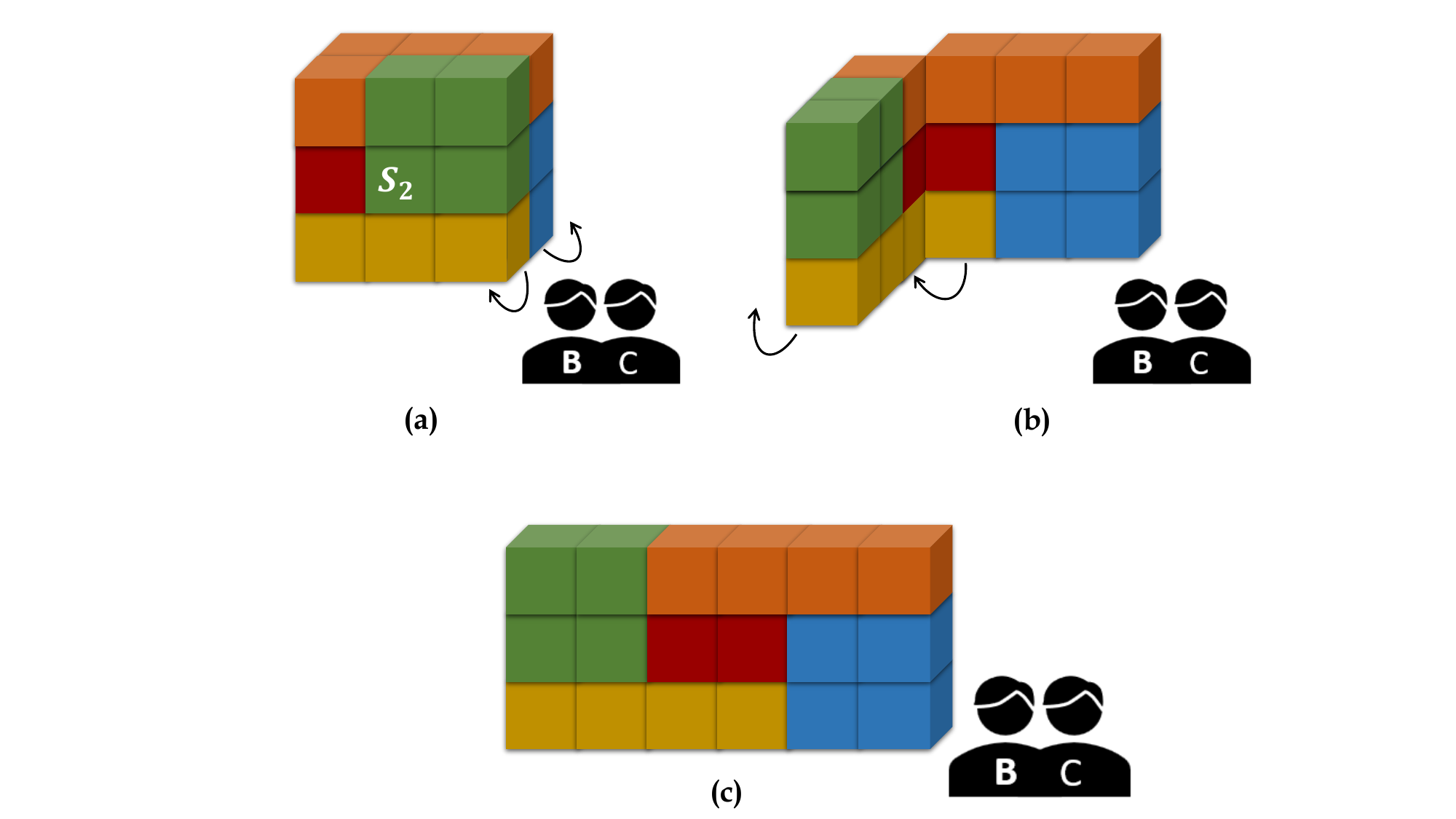}
     \caption{These figures together represent the consideration of the tripartite system as a bipartite system in $A|BC$ partition.}
    \label{fig:S2Unfold}
\end{figure}

\begin{figure}[htp]
    \centering
    \includegraphics[scale=.33]{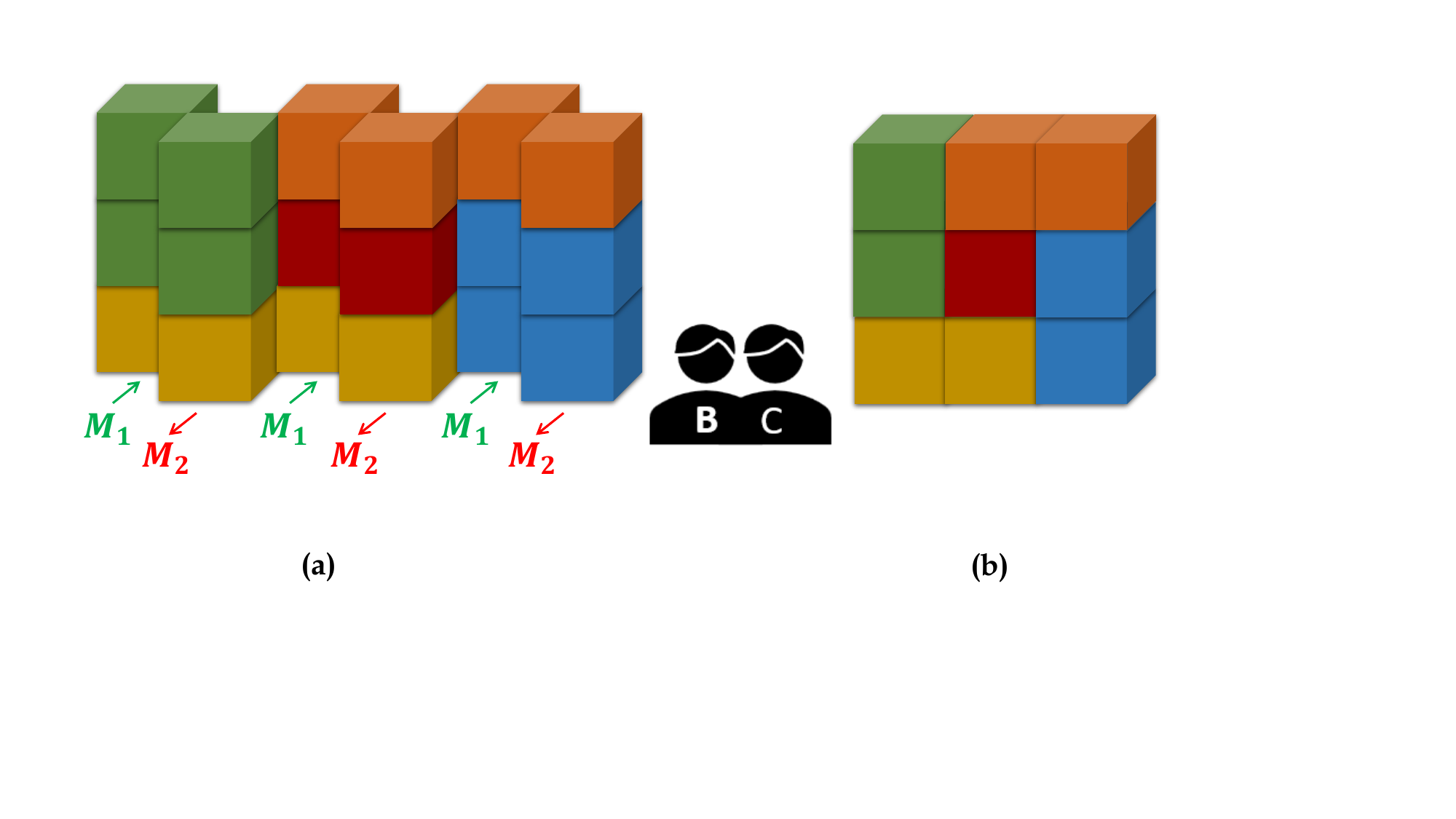}
     \caption{Pictorial representation of hidden nonlocality activation of $S_2$ when Bob and Charlie act from the same lab. $M_1=\ket{00}\bra{00}+\ket{02}\bra{02}+\ket{11}\bra{11}$ and $M_2=\ket{01}\bra{01}+\ket{10}\bra{10}+\ket{12}\bra{12}$ be the measurement setup given by Bob and Charlie together. \textbf{(a)} The `$1$' outcome will isolate the odd columns of the block structure of $S_2$(considering the system in $A|BC$ partition as in FIG.[\ref{fig:S2Unfold}]), while the `$2$' outcome will isolate its even columns. In both cases, we end up on the nonlocal set of nine mutually orthogonal pure product states in $A|BC$ partition. \textbf{(b)} This figure represents the block structure of the aforementioned set which is nothing but a Bennett's set.}
    \label{fig:S2Activation}
\end{figure}


Now we construct two local sets $S'_2$[\ref{S'_2}]  and $S''_2$[\ref{S"_2}] where no single party alone is capable of giving a nontrivial projective measurement to activate nonlocality under LPCC. However, Charlie and Alice can activate nonlocality in $S'_2$, while Alice and Bob can activate nonlocality in $S''_2$ when they act jointly. 

The first set, $S'_2$, contains nine orthogonal pure states in $\mathcal{H}^3_{A}\otimes\mathcal{H}^3_{B}\otimes\mathcal{H}^2_{C}$. Each state is a product state in $B|CA$ partition with the following structure:
\begin{example}
	$S'_2 \subset \mathbb{C}^{3 \otimes 3 \otimes 2}:$
\end{example}

\begin{equation}
\label{S'_2}
	\begin{array}{l}
		\ket{\psi^2_{1}} \equiv \ket{2}_{B}\ket{33+34+35-45}_{CA}\\
		\ket{\psi^2_{2}} \equiv \ket{2}\ket{33-34-35-45}\\
		\ket{\psi^2_{3}} \equiv \ket{3}\ket{35-45}\\
		\ket{\psi^2_{4}} \equiv \ket{4}\ket{43+44+45-35}\\
		\ket{\psi^2_{5}} \equiv \ket{4}\ket{43-44-45-35}\\
		\ket{\psi^2_{6,7}} \equiv \ket{2 \pm 3}\ket{43-44}\\
		\ket{\psi^2_{8,9}} \equiv \ket{3 \pm 4}\ket{33-34}
	\end{array}
\end{equation}
The second set, $S''_2$, also contains same number of orthogonal pure states in $\mathcal{H}^2_{A}\otimes\mathcal{H}^3_{B}\otimes\mathcal{H}^3_{C}$. Each state is a product state in $C|AB$ partition with the following structure:
\begin{example}
	$S''_2 \subset \mathbb{C}^{2 \otimes 3 \otimes 3}:$
\end{example}

\begin{equation}
\label{S"_2}
	\begin{array}{l}
		\ket{\eta^2_{1}} \equiv \ket{5}_{C}\ket{65+66+67-77}_{AB}\\
		\ket{\eta^2_{2}} \equiv \ket{5}\ket{65-66-67-77}\\
		\ket{\eta^2_{3}} \equiv \ket{6}\ket{67-77}\\
		\ket{\eta^2_{4}} \equiv \ket{7}\ket{75+76+77-67}\\
		\ket{\eta^2_{5}} \equiv \ket{7}\ket{75-76-77-67}\\
		\ket{\eta^2_{6,7}} \equiv \ket{5 \pm 6}\ket{75-76}\\
		\ket{\eta^2_{8,9}} \equiv \ket{6 \pm 7}\ket{65-66}
	\end{array}
\end{equation}

We have so far discussed only those sets whose nonlocality is activated only when a particular pair of parties perform a joint measurement. However, we now affirm the existence of a set of quantum states whose nonlocality can be activated upon the joint measurement of any two parties.

\begin{theorem}---
	In the local set of states $S=S_2 \cup S'_2 \cup S''_2$ of the tripartite system $\mathbb{C}^{8 \otimes 8 \otimes 8}$, activation of nonlocality is impossible under LPCC. However, if any pair of parties can measure jointly using PVM, they can activate nonlocality in it.  
\end{theorem}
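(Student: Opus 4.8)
My plan rests on a single structural feature of the embedding into $\mathbb{C}^{8\otimes8\otimes8}$: on every party the three constituents occupy pairwise disjoint index ranges — $S_2$ on $A\in\{0,1,2\}$, $B\in\{0,1\}$, $C\in\{0,1,2\}$, then $S'_2$ on $A\in\{3,4,5\}$, $B\in\{2,3,4\}$, $C\in\{3,4\}$, and $S''_2$ on $A\in\{6,7\}$, $B\in\{5,6,7\}$, $C\in\{5,6,7\}$ — so that any two states from different constituents are already orthogonal on each single subsystem. I would first isolate the two consequences this provides. (i) Under a PVM by one party, two post-measurement states from different constituents stay orthogonal: for $\ket{\phi}=\ket{a}\ket{\beta}$ and $\ket{\psi}=\ket{a'}\ket{\beta'}$ one has $\bra{\phi}(P_l\otimes\mathbb{I}\otimes\mathbb{I})\ket{\psi}=\bra{a}P_l\ket{a'}\braket{\beta}{\beta'}$, and the two untouched parties force $\braket{\beta}{\beta'}=0$. (ii) After any single-party measurement a \emph{different} party can still separate the three constituents by projecting onto its disjoint index blocks, its subsystem being intact; combined with the LPCC-distinguishability of each constituent this also re-proves that $S$ is local.

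For the impossibility half I would reduce single-party activation of $S$ to the activation of one constituent. Fix the measuring party $X$ and an outcome $P_l$. By (ii) a second party identifies the constituent, so $P_lS$ is distinguishable as soon as each of $P_l(S_2),P_l(S'_2),P_l(S''_2)$ is; hence it suffices that $X$ cannot render any one constituent indistinguishable. Restricting to a constituent $Y$ with $X$-support $V$, I would run the orthogonality bookkeeping of the $S_2$-impossibility theorem and its two permuted analogues: the pairs orthogonal only through $X$ have non-zero overlap on the remaining parties and force the off-diagonal entries of the compression $\Pi_V P_l\Pi_V$ to vanish, while the $\pm$-superposition pairs force its diagonal entries to agree. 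This pins $\Pi_V P_l\Pi_V=c_l\,\Pi_V$, so $X$ acts on $Y$ only through a scalar and cannot destroy its LPCC-distinguishability. The three pairs with $Y=S_2$ are covered by the $S_2$-impossibility theorem and the remaining six by the stated single-party impossibility for $S'_2$ and $S''_2$, so no single-party PVM activates $S$.

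For the activation half I would treat the three pairs symmetrically; take $\{B,C\}$. The $S_2$-activation theorem supplies a joint $BC$-PVM $\{M_1,M_2\}$ both of whose outcomes turn the entangled $BC$-parts of $S_2$ into product states realising Bennett's locally indistinguishable nine-state set. Because the $BC$-supports of the three constituents are themselves disjoint (their $B$-ranges do not overlap), I would complete $\{M_1,M_2\}$ to a PVM on $\mathbb{C}^8_B\otimes\mathbb{C}^8_C$ by attaching the projectors onto the $S'_2$- and $S''_2$-supports (and the unused dimensions) to the first outcome. Outcome $1$ then returns a Bennett set together with the untouched orthogonal blocks $S'_2,S''_2$, outcome $2$ returns only the second Bennett set, and cross-block orthogonality survives because $A$ is untouched and carries disjoint supports, so the completed measurement is an OPLM. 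Since each outcome contains a locally indistinguishable Bennett subset, and any set containing a locally indistinguishable subset is itself indistinguishable, the transformation is deterministic. The $\{C,A\}$-activation of $S'_2$ and the $\{A,B\}$-activation of $S''_2$ dispose of the other two pairs verbatim.

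The step I expect to fight hardest is the compression identity $\Pi_V P_l\Pi_V=c_l\,\Pi_V$ in the impossibility half. The single-party theorems I am invoking rule out only rank-one projectors sitting inside a constituent, whereas a PVM on the full eight-dimensional subsystem may couple one constituent's support to the other two, so its restriction to $V$ is a priori only a positive operator rather than a projector; I must check that each constituent's orthogonality relations supply enough independent constraints to force that restriction down to a multiple of the identity for all nine (party, constituent) pairs. The secondary point is the deterministic ``every outcome'' requirement of the activation half: the completion of the joint PVM must never leave an outcome assembled solely from the two non-activated — and therefore distinguishable — constituents, which is exactly why the leftover projectors must be attached to an outcome that already carries a Bennett piece.
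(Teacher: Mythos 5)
Your overall skeleton matches the paper's: reduce everything to the three constituents via the disjoint single-party supports, invoke the constituent impossibility results for the LPCC half, and invoke the constituent joint-activation results for the two-party half. Your activation half is fine and is actually more careful than the paper's one-line version (the completion of $\{M_1,M_2\}$ by attaching the leftover block projectors to an outcome that already carries a Bennett piece, plus the observation that a superset of a locally indistinguishable set is locally indistinguishable, is exactly the bookkeeping the paper leaves implicit).

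The genuine gap is in the impossibility half, in the step you yourself flagged: the claim that the orthogonality relations of every constituent force $\Pi_V P_l\Pi_V=c_l\,\Pi_V$ for all nine (party, constituent) pairs is false for three of them. Each constituent is an $S_2$-type set in which exactly one party --- the one carrying the second tensor factor of the entangled pair (Charlie for $S_2$, Alice for $S'_2$, Bob for $S''_2$) --- genuinely admits nontrivial orthogonality-preserving PVMs; for instance Charlie may measure $\{\ket{0-1}\bra{0-1},\ket{0+1}\bra{0+1},\ket{2}\bra{2}\}$ on $S_2$, whose compression to $\mathrm{span}\{\ket{0},\ket{1},\ket{2}\}$ is nowhere near a multiple of $\Pi_V$. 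So for each measuring party $X$ there is exactly one constituent on which your ``off-diagonals vanish, diagonals agree'' bookkeeping cannot close, and the impossibility there is \emph{not} that $X$ acts as a scalar but that every nontrivial outcome leaves a residual set that is still locally distinguishable and still non-activable --- this is the content of the paper's Appendix D case analysis (and its two permuted copies), which must be imported wholesale rather than replaced by the compression identity. You also then need the (easy, but necessary) remark that after such a nontrivial first round the surviving set is the transformed constituent sitting alongside the two scalar-multiplied untouched constituents, and a second party can still separate the blocks, so non-activability of the whole set again reduces to non-activability of each block. A secondary loose end: since a PVM element on $\mathbb{C}^8$ compresses to a general positive operator bounded by $\Pi_V$ (the compressions form a POVM on $V$, not a PVM), even for the six ``scalar'' pairs you must run the orthogonality constraints against an arbitrary positive matrix rather than against the rank-one projectors treated in the paper's Theorem 3; this does go through (off-diagonals vanish from the pairs orthogonal only on $X$, diagonals equalize from the $\ket{i\pm j}$ pairs), but it is a strictly stronger statement than the one the constituent theorems literally prove, so it cannot be cited, only re-derived.
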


\begin{proof}
	Suppose Alice goes first with her measurement choices $M=\ket{0}\bra{0}+\ket{1}\bra{1}+\ket{2}\bra{2}$, $M'=\ket{3}\bra{3}+\ket{4}\bra{4}+\ket{5}\bra{5}$ and $M''=\ket{6}\bra{6}+\ket{7}\bra{7}$. Then the respective outcomes separate the sets $S_2$,$S'_2$ and $S''_2$. Since each set is locally distinguishable, $S$ is local under LPCC.
	
	Now it is clear that $S$ is free from local redundancy because some of its subsets ($S_2$,$S'_2$ or $S''_2$) carry the same property. It is also straightforward to show that no single party will be able to activate nonlocality because it is impossible to create in $S_2$, $S'_2$, and $S''_2$. However, Bob and Charlie can jointly activate nonlocality in $S_2$. Similarly, Alice-Charlie and Alice-Bob can perform joint measurements to activate nonlocality in $S'_2$ and $S''_2$ respectively. 
\end{proof}

\begin{example}
	$S_{2,m} \subset \mathbb{C}^{(2m+1) \otimes 2 \otimes (2m+1)}:$
\end{example}

We provide a specific generalization of $S_2$ in $\mathbb{C}^{(2m+1) \otimes 2 \otimes (2m+1)}$ system which have the following form:
\begin{widetext}
	\begin{equation}
    \begin{array}{l}
\ket{\zeta_1}=\vspace{5 pt}\ket{m}\ket{0-1}\ket{2m}\\
\vspace{3 pt}\text{For } i=0,1,\ldots,m-1,\\

\ket{\zeta^1_{i,1}}=\ket{i}\left\{\left(\frac{1-(-1)^{m+i}}{2}\ket{0}+\frac{1+(-1)^{m+i}}{2}\ket{1}\right)\left(\ket{2m-2}+\ket{2m-1}+\ket{2m}\right)-\left(\frac{1+(-1)^{m+i}}{2}\ket{0}+\frac{1-(-1)^{m+i}}{2}\ket{1}\right)\ket{2m}\right\}\\
    \ket{\zeta^1_{i,2}}=\ket{i}\left\{\left(\frac{1-(-1)^{m+i}}{2}\ket{0}+\frac{1+(-1)^{m+i}}{2}\ket{1}\right)\left(\ket{2m-2}-\ket{2m-1}-\ket{2m}\right)-\left(\frac{1+(-1)^{m+i}}{2}\ket{0}+\frac{1-(-1)^{m+i}}{2}\ket{1}\right)\ket{2m}\right\}\\
    \ket{\zeta^1_{i,3}}=\ket{2m-i}\left\{\left(\frac{1+(-1)^{m+i}}{2}\ket{0}+\frac{1-(-1)^{m+i}}{2}\ket{1}\right)\left(\ket{2m-2}+\ket{2m-1}+\ket{2m}\right)-\left(\frac{1-(-1)^{m+i}}{2}\ket{0}+\frac{1+(-1)^{m+i}}{2}\ket{1}\right)\ket{2m}\right\}\\
\vspace{5 pt}\ket{\zeta^1_{i,4}}=\ket{2m-i}\left\{\left(\frac{1+(-1)^{m+i}}{2}\ket{0}+\frac{1-(-1)^{m+i}}{2}\ket{1}\right)\left(\ket{2m-2}-\ket{2m-1}-\ket{2m}\right)-\left(\frac{1-(-1)^{m+i}}{2}\ket{0}+\frac{1+(-1)^{m+i}}{2}\ket{1}\right)\ket{2m}\right\}\\

\text{For }k=0,1,\ldots,2m-2i-1,\\
\vspace{5 pt}\ket{\zeta^1_{i,4+k}}=\left(\ket{i+k}\pm \ket{i+k+1}\right)\left(\frac{1-(-1)^{k}}{2}\ket{0}+\frac{1+(-1)^{k}}{2}\ket{1}\right)\left(\ket{2i}-\ket{2i+1}\right)\\

    \vspace{3 pt}\text{For }m\geqslant 2 \text{ and } k_1=0,1,\cdots,\left[\frac{m-i}{2}\right]-1,\\
    \ket{\zeta^2_{i,k_1}}=\ket{i}\ket{0}\left(\ket{2i+4k_1}+\ket{2i+4k_1+1}+\ket{2i+4k_1+2}-\ket{2i+4k_1+3}\right)\\
    \ket{\zeta^3_{i,k_1}}=\ket{i}\ket{0}\left(\ket{2i+4k_1}-\ket{2i+4k_1+1}-\ket{2i+4k_1+2}-\ket{2i+4k_1+3}\right)\\
    \ket{\zeta^4_{i,k_1}}=\ket{2m-i}\ket{1}\left(\ket{2i+4k_1}+\ket{2i+4k_1+1}+\ket{2i+4k_1+2}-\ket{2i+4k_1+3}\right)\\
    \vspace{5 pt}\ket{\zeta^5_{i,k_1}}=\ket{2m-i}\ket{1}\left(\ket{2i+4k_1}-\ket{2i+4k_1+1}-\ket{2i+4k_1+2}-\ket{2i+4k_1+3}\right)\\
    
    \vspace{3 pt}\text{For }m\geqslant 3 \text{ and } k_2=0,1,\ldots,\left[\frac{m-i-1}{2}\right]-1,\\
    \ket{\zeta^6_{i,k_2}}=\ket{i}\ket{1}\left(\ket{2i+4k_2+2}+\ket{2i+4k_2+3}+\ket{2i+4k_2+4}-\ket{2i+4k_2+5}\right)\\
    \ket{\zeta^7_{i,k_2}}=\ket{i}\ket{1}\left(\ket{2i+4k_2+2}-\ket{2i+4k_2+3}-\ket{2i+4k_2+4}-\ket{2i+4k_2+5}\right)\\
    \ket{\zeta^8_{i,k_2}}=\ket{2m-i}\ket{0}\left(\ket{2i+4k_2+2}+\ket{2i+4k_2+3}+\ket{2i+4k_2+4}-\ket{2i+4k_2+5}\right)\\
    \ket{\zeta^9_{i,k_2}}=\ket{2m-i}\ket{0}\left(\ket{2i+4k_2+2}-\ket{2i+4k_2+3}-\ket{2i+4k_2+4}-\ket{2i+4k_2+5}\right)\\
    \end{array}
\end{equation}

The above set is structured in FIG.[\ref{fig:Generalization}] in the $A|BC$ partition. After a closer look, we can check that the aforementioned set is perfectly aligned with the set $S_2$ if we set $m=1$, i.e., $S_{2,1} \equiv S_2$.

\begin{figure}[htp]
    \centering
    \includegraphics[scale=.64]{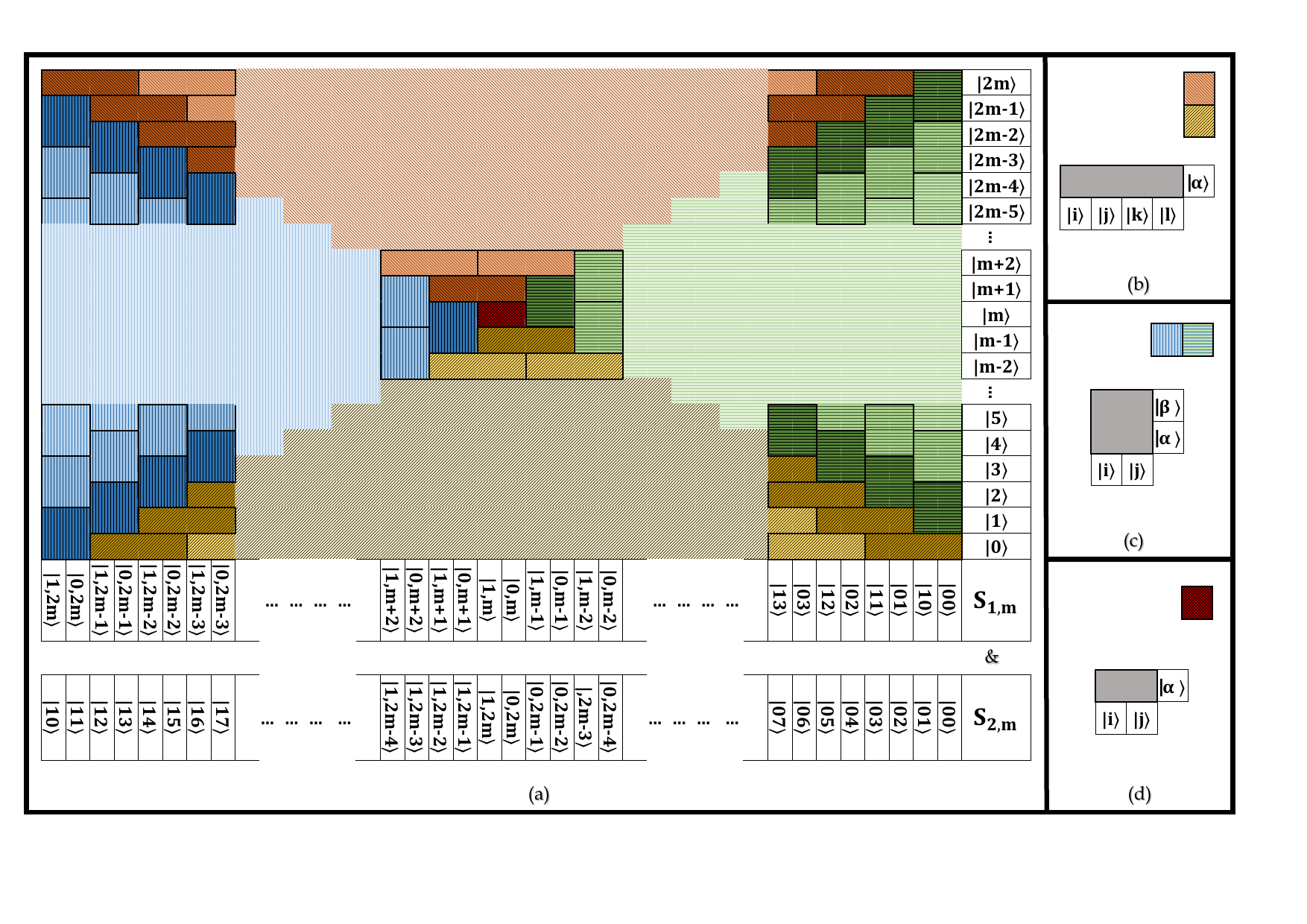}
     \caption{(a) The figure showcases a fascinating structure of two distinct sets of orthogonal quantum states of a tripartite system, denoted as $S_{1,m}$ and $S_{2,m}$. Although both sets share a similar underlying structure, their physical properties diverge depending on the chosen basis in the $A|BC$ partition. As we can see, the whole structure is made of three types of blocks. Here, we represent (b) two quantum states $\ket{\alpha}_A\ket{(i-l)\pm (j+k)}_{BC}$ by the rectangular blocks made of four small squares; (c) two quantum states $\ket{\alpha \pm \beta}_A\ket{i-j}_{BC}$ by the square blocks made of four small squares; and (d) one quantum states $\ket{\alpha}_A\ket{i-j}_{BC}$ by the rectangular block made of two small squares.}
    \label{fig:Generalization}
\end{figure}

\end{widetext}
\section{general definition of activation of nonlocality in multipartite scenario}
In this section, we present a general definition of quantum hidden nonlocality based on the local activation for the orthogonal quantum states in $d_1 \otimes d_2 \otimes \cdots \otimes d_n, n \geqslant 3$. This definition explores how hidden nonlocality can be revealed through OPLMs within such multipartite systems.

\emph{Definition 5}--- \emph{$m$-partition} of any $n$-partite quantum system is the consideration of the system as an m-partite one, where more than one party might be considered as a single party.

\emph{Definition 6}--- In $d_1 \otimes d_2 \otimes \cdots \otimes d_n, n \geqslant 3$, a locally distinguishable set of quantum states is called \emph{$m$-activable} if it is possible to activate hidden nonlocality in at least one $m$-partition such that the postmeasurement states (after activation) become locally irreducible within that partition. 

\emph{Definition 7}--- In $d_1 \otimes d_2 \otimes \cdots \otimes d_n, n \geqslant 3$, a locally distinguishable set of quantum states is called \emph{strong-$m$-activable} if it is possible to activate hidden nonlocality in at least one $m$-partition such that the postmeasurement states (after activation) are not only locally irreducible within that partition, but also locally irreducible in at least one $(m-1)$-partition.

\emph{Proposition 2}--- In $d_1 \otimes d_2 \otimes \cdots \otimes d_n, n \geqslant 3$, a strong-$n$-activable set is always $(n-1)$-activable but the converse is not necessarily true.

Suppose $A_i$'s are the local observer associated with the local subsystems of dimension $d_i$. Given that the aforementioned set (say $\mathcal{S}$) is strong-$n$-activable and therefore the postmeasurement states (after activation) are locally irreducible in at least one $(n-1)$-partition. Without loss of generality, we can consider the corresponding partition to be $P_{n-1}=A_1A_2|A_3|A_4|\cdots| A_n$. We  know that any OPLM in the $n$-partition $P_{n}=A_1|A_2|A_3|A_4|\cdots| A_n$ is also an OPLM in $P_{n-1}$. Therefore, the OPLMs responsible for activating hidden nonlocality in $P_{n}$ can also activate nonlocality in $P_{n-1}$. Consequently, in the $n$-partition $P_{n}$, the set  $\mathcal{S}$ is locally activable, and the postmeasurement states (after activation) are locally irreducible. Therefore, by definition $\mathcal{S}$ is $(n-1)$-activable.

Furthermore, we demonstrate a class of sets $\{\mathcal{U}_j^n\}_{j=2}^n$ within an $n$-partite system, where $\mathcal{U}_j^n$ is $j$-activable but not $m$-activable for any $n \geqslant m > j$. Accordingly, the degree of nonlocal operations required to activate the hidden nonlocality in $\mathcal{U}_j^n$ is greater than that required for $\mathcal{U}_{m>j}^n$. In contrast, $\mathcal{U}_j^n$ can be considered to have a lower degree of nonlocality compared to $\mathcal{U}_{m>j}^n$. Thus, the relationship between these sets can be described as:  
$$ \mathcal{H}_1^n<\mathcal{H}_2^n<\cdots<\mathcal{H}_{j-1}^n<\mathcal{H}_j^n<\mathcal{H}_{j+1}^n<\cdots<\mathcal{H}_n^n
$$

where, for $j=2, \ldots, n, \mathcal{H}_j^n$ denotes the degree of nonlocality of  $\mathcal{U}_j^n$. Note that, $ \mathcal{H}_1^n$ represents the degree of nonlocality associated with $ \mathcal{U}_1^n$, whose hidden nonlocality cannot be activated in any partition. 

Now, considering the scenario from the perspective of locality, the relationship reverses to: 
$$
\mathcal{L}_1>\mathcal{L}_2>\cdots>\mathcal{L}_{j-1}>\mathcal{L}_{j}>\mathcal{L}_{j+1}>\cdots>\mathcal{L}_{n}
$$
where, for $j=1, \ldots, n, \mathcal{L}_j$ denotes the degree of locality of $\mathcal{U}_j^n$. This indicates that $\mathcal{U}_j^n$ is more local than $\mathcal{U}_{m>j}^n$. Notably, $\mathcal{U}_n^n$ represents the class of locally activable sets in $n$-partite system, whereas $\mathcal{U}_1^n$ represents the class of strong local sets.

\section{discussion}
 In this paper, we construct two local sets, $S_1$ and $S_2$ in a $3\otimes2\otimes3$ system. For the set $S_1$, a single party can activate nonlocality through OPLM under LPCC, whereas for $S_2$, this is not possible. However, if two parties jointly adopt projective measurements, they can successfully activate nonlocality in $S_2$. In other words, $S_2$ demands a higher degree of nonlocal operations to reveal its hidden nonlocality compared to $S_1$. Thus, it can be considered to have a lower degree of nonlocality and, conversely, a higher degree of locality than $S_1$. This introduces a stronger manifestation of locality in multipartite systems through the notion of activating hidden nonlocality, asserting that $S_2$ is more local than $S_1$. Throughout the study, we consistently emphasize the nontriviality in constructing such activable sets. Our focus remains steadfast on addressing the genuineness of the hidden nonlocality activation by thoroughly examining the sets to ensure they have no redundancy. Many interesting problems have risen during the study. One such issue is finding the existence of a LOCC distinguishable class, whose nonlocality cannot be activated through OPLMs but requires joint measurements by two or more parties. During our work, we observed that local activation does not imply joint activation (when more than one parties measure jointly). The converse statement is also not true. Therefore, the complexity of the structure of a set bearing genuine hidden nonlocality in a system involving three or more parties remains inadequately understood. Throughout our study, we also observe that PVMs are sufficient to activate nonlocality in the locally distinguishable sets presented in \cite{BandyopadhyayHalderActivation2021, Li2022, GhoshStrongActivation2022}. However, we have not yet studied such sets where nonlocality cannot be activated through PVMs, but instead require another class of local measurements. Identifying and understanding such sets could provide deeper insights into the nature of hidden nonlocality and the types of measurements needed to reveal it.

 	\section*{ACKNOWLEDGEMENTS}
 The author S. Bera acknowledges the support from CSIR, India. The authors A. Bhunia and I. Biswas acknowledge the support from UGC, India. The authors I. Chattopadhyay and D. Sarkar acknowledge the work as part of initiatives by DST India.

\section{Appendix}
\subsection{Distinguishability of set $S_1$.}
\label{distinguishability_S_1}
First, we give a set of Charlie's projective measurement operators:

\begin{equation*}
	\begin{array}{c}
		P_1=\ket{0}\bra{0}+\ket{1}\bra{1}\\
		P_2=\ket{2}\bra{2}
	\end{array}
\end{equation*}

If the outcome is 1, then postmeasurement states would be

\begin{equation}
	\begin{array}{c}
		\ket{0}\ket{00+01+10-11}\\
		\ket{0}\ket{00-01-10-11}\\
		\ket{1}\ket{0-1}\ket{1}\\
		\ket{2}\ket{0 \pm 1}\ket{1}\\
		\ket{1 \pm 2}\ket{0-1}\ket{0}
	\end{array}
\end{equation}

Now if Bob measures in $\ket{0-1}\bra{0-1}$ and $\ket{0+1}\bra{0+1}$, the respective postmeasurement states would be

\begin{equation}
	\label{S1DC1B-}
	\begin{array}{c}
		\ket{0}\ket{0-1}\ket{1}\\
		\ket{0}\ket{0-1}\ket{0}\\
		\ket{1}\ket{0-1}\ket{1}\\
		\ket{2}\ket{0 - 1}\ket{1}\\
		\ket{1 \pm 2}\ket{0-1}\ket{0}
	\end{array}
\end{equation}
and
\begin{equation}
	\label{S1DC1B+}
	\begin{array}{c}
		\ket{0}\ket{0+1}\ket{0}\\
		\ket{0}\ket{0+1}\ket{1}\\
		\ket{2}\ket{0+1}\ket{1}
	\end{array}
\end{equation}

The states in (\ref{S1DC1B-}) and (\ref{S1DC1B+}) are the orthogonal product states in $\mathbb{C}^{3 \otimes 1 \otimes 2}$ and hence by lemma-1 these can be distinguished by LPCC.

While the outcome is 2, then postmeasurement states would be

\begin{equation}
	\begin{array}{c}
		\ket{2}\ket{0 \pm 1}\ket{2}\\
		\ket{0 \pm 1}\ket{0-1}\ket{2}
	\end{array}
\end{equation}

Clearly, these are the orthogonal product states in $\mathbb{C}^{3 \otimes 2 \otimes 1}$, and hence by lemma-1 these can be distinguished by LPCC.
\subsection{Proof of theorem 2.}
\label{theorem_2}
Here, Bob goes first and measures in $M_0=\ket{0}\bra{0}$ and $M_1=\ket{1}\bra{1}$.

If the outcome is 0, the postmeasurement states would be

\begin{equation}
	\begin{array}{c}
		\ket{0}\ket{00 \pm 01}\\
		\ket{0\pm1}\ket{02}\\
		\ket{1\pm2}\ket{00}\\
		\ket{2}\ket{01 \pm 02}\\
		\ket{1}\ket{01}
	\end{array}
\end{equation}

Now we can write these states in terms of $\{\ket{0},\ket{1},\ket{2}\}$ basis instead of  $\{\ket{00},\ket{01},\ket{02}\}$ as

\begin{equation}
	\label{BennettPB1999}
	\begin{array}{c}
		\ket{0}\ket{0 \pm 1}\\
		\ket{0\pm1}\ket{2}\\
		\ket{1\pm2}\ket{0}\\
		\ket{2}\ket{1 \pm 2}\\
		\ket{1}\ket{1}
	\end{array}
\end{equation}

This is a nonlocal set of nine pure orthogonal product states of $3 \otimes 3$ system proposed by Bennett et al. in \cite{BennettPB1999}.

If the outcome is 1, the postmeasurement states would be

\begin{equation}
	\begin{array}{c}
		\ket{0}\ket{10 \pm 11}\\
		\ket{0\pm1}\ket{12}\\
		\ket{1\pm2}\ket{10}\\
		\ket{2}\ket{11 \pm 12}\\
		\ket{1}\ket{11}
	\end{array}
\end{equation}

Now we can rewrite the above states in terms of $\{\ket{0},\ket{1},\ket{2}\}$ basis instead of $\{\ket{10},\ket{11},\ket{12}\}$ as in (\ref{BennettPB1999}) and hence locally indistinguishable.


\subsection{Distinguishability of set $S_2$.}
\label{distinguishability_S_2}
First, we give a set of Charlie's projective measurement operators:

\begin{equation*}
	\begin{array}{c}
		P_1=\ket{0}\bra{0}+\ket{1}\bra{1}\\
		P_2=\ket{2}\bra{2}
	\end{array}
\end{equation*}

If the outcome is 1, then postmeasurement states would be

\begin{equation}
	\begin{array}{c}
		\ket{0}\ket{0}\ket{0 \pm 1}\\
		\ket{0 \pm 1}\ket{1}\ket{0-1}\\
		\ket{1 \pm 2}\ket{0}\ket{0-1}\\
		\ket{2}\ket{1}\ket{0 \pm 1}
	\end{array}
\end{equation}

Now if Bob measures in $\ket{0}\bra{0}$ and $\ket{1}\bra{1}$, the respective postmeasurement states would be

\begin{equation}   
	\label{S2DC1B0}
	\begin{array}{c}
		\ket{0}\ket{0}\ket{0 \pm 1}\\
		\ket{1 \pm 2}\ket{0}\ket{0-1}
	\end{array}
\end{equation}

\begin{center}
	and
\end{center}

\begin{equation}
	\label{S2DC1B1}
	\begin{array}{c}
		\ket{0 \pm 1}\ket{1}\ket{0-1}\\
		\ket{2}\ket{1}\ket{0 \pm 1}
	\end{array}
\end{equation}

The states in (\ref{S2DC1B0}) and (\ref{S2DC1B1}) are the orthogonal product states in $\mathbb{C}^{3 \otimes 2 \otimes 1}$ and hence by lemma-1 these can be distinguished by LPCC.

While the outcome is 2, then postmeasurement states would be

\begin{equation}
	\begin{array}{c}
		\ket{0}\ket{0 \pm 1}\ket{2}\\
		\ket{2}\ket{0 \pm 1}\ket{2}\\
		\ket{1}\ket{0 - 1}\ket{2}
	\end{array}
\end{equation}

Obviously, these are orthogonal product states in $\mathbb{C}^{3 \otimes 2 \otimes 1}$, and hence by lemma-1 these can be distinguished by LPCC.

\subsection{Proof of theorem 3.}
\label{theorem_3}
As with Alice, there must be a corresponding rank-1 projective measurement that can be written as $P_1=\ket{\theta}\bra{\theta}$ when Charlie goes first. Then the postmeasurement states $(\mathbb{I} \otimes  \mathbb{I} \otimes \ket{\theta}\bra{\theta})\ket{\phi^2_i}$ can be written as

\begin{equation}
	\begin{array}{c}
		\ket{0}(\braket{\theta}{0\pm (1+2)}\ket{0}-\braket{\theta}{2}\ket{1})\ket{\theta}\\
		\braket{\theta}{2}\ket{1}\ket{0-1})\ket{\theta}\\
		\ket{2}(\braket{\theta}{0\pm (1+2)}\ket{1}-\braket{\theta}{2}\ket{0})\ket{\theta}\\
		\braket{\theta}{0-1}\ket{0 \pm 1}\ket{1}\ket{\theta}\\
		\braket{\theta}{0-1}\ket{1 \pm 2}\ket{0}\ket{\theta}
	\end{array}
\end{equation}

Clearly, at least one of $\ket{\phi_3}$ and $\ket{\phi_6}$ will be omitted, while Charlie's rank-1 projective measurement $P_1$ depends on the conditions $\braket{\theta}{0-1}=0$ and $\braket{\theta}{2}=0$, respectively.

\textbf{Case-1($\braket{\theta}{0-1} \neq 0$)}

Therefore, $\braket{\theta}{2} = 0$ and the postmeasurement states will be 

\begin{equation}
	\begin{array}{c}
		\braket{\theta}{0\pm 1}\ket{0}\ket{0}\ket{\theta}\\
		\braket{\theta}{0\pm 1}\ket{2}\ket{1}\ket{\theta}\\
		\braket{\theta}{0-1}\ket{0 \pm 1}\ket{1}\ket{\theta}\\
		\braket{\theta}{0-1}\ket{1 \pm 2}\ket{0}\ket{\theta}
	\end{array}
\end{equation}

As $\braket{\theta}{0-1} \neq 0$, $\ket{\phi^2_1}$ should be eliminated by Charlie's measurement discussed earlier and hence $\braket{\theta}{0+1}=0$.

Now we have $\braket{\theta}{2}=0$ and $\braket{\theta}{0+1}=0$, which implies $\ket{\theta}=\ket{0-1}$. Therefore, the postmeasurement states will be

\begin{equation}
	\label{S2EC-}
	\begin{array}{c}
		\ket{0}\ket{0}\ket{0-1}\\
		\ket{0 \pm 1}\ket{1}\ket{0-1}\\
		\ket{1 \pm 2}\ket{0}\ket{0-1}\\
		\ket{2}\ket{1}\ket{0-1}
	\end{array}
\end{equation}

The states in (\ref{S2EC-}) are orthogonal product states in $\mathbb{C}^{3 \otimes 2 \otimes 1}$ and hence these are distinguishable by LPCC by lemma-1 and activation of nonlocality by corollary-1 is therefore unlikely.

\textbf{Case-2($\braket{\theta}{0-1} = 0$)}

Therefore, the postmeasurement states will be

\begin{equation}
	\begin{array}{c}
		\ket{0}(\braket{\theta}{0+1}\ket{0}+\braket{\theta}{2}\ket{0-1})\ket{\theta}\\
		\braket{\theta}{2}\ket{0}\ket{0+1})\ket{\theta}\\
		\braket{\theta}{2}\ket{1}\ket{0-1}\ket{\theta}\\
		\braket{\theta}{2}\ket{2}\ket{0+1})\ket{\theta}\\
		\ket{2}(\braket{\theta}{0+1}\ket{1}-\braket{\theta}{2}\ket{0-1})\ket{\theta}
	\end{array}
\end{equation}

Here we can easily notice that $\ket{\phi_0}$ and $\ket{\phi_1}$ cannot coexist and so either $\braket{\theta}{0+1}=0$ or $\braket{\theta}{2}=0$.

\textbf{Subcase-2A($\braket{\theta}{0-1} = 0$ and $\braket{\theta}{2} \neq 0$)}

Therefore, $\braket{\theta}{0+1}=0$ and hence, $\ket{\theta}=\ket{2}$ and the postmeasurement states will be 

\begin{equation}
	\label{S2EC2}
	\begin{array}{c}
		\ket{0}\ket{0\pm1}\ket{2}\\
		\ket{1}\ket{0-1}\ket{2}\\
		\ket{2}\ket{0\pm1}\ket{2}
	\end{array}
\end{equation}

The states in (\ref{S2EC2}) are orthogonal product states in $\mathbb{C}^{3 \otimes 2 \otimes 1}$ and hence by lemma-1 these can be distinguished by LPCC and by corollary-1 it is impossible to further enable nonlocality. 

\textbf{Subcase-2B($\braket{\theta}{0-1} = 0$ and $\braket{\theta}{2} = 0$)}

Therefore, $\ket{\theta}=\ket{0+1}$ and postmeasurement states will be 

\begin{equation}
	\label{S2EC+}
	\begin{array}{c}
		\ket{0}\ket{0}\ket{0+1}\\
		\ket{2}\ket{1}\ket{0+1}
	\end{array}
\end{equation}

These are two orthogonal states and hence these can be distinguished by LPCC, also from here it is impossible to activate nonlocality.

So in conclusion we say that if Charlie goes first, then he can give three rank-1 projective measurements $P_1=\ket{0-1}\bra{0-1}$, $P_2=\ket{0+1}\bra{0+1}$ and $P_3=\ket{2}\bra{2}$ or a rank-1 and a rank-2 projective measurements. 

In the first case, we have already discussed that postmeasurement states are local, and since those are product states in $\mathbb{C}^{3 \otimes 2 \otimes 1}$, we cannot enable nonlocality.

Now we focus on the second case, where a rank-1 projective measurement $M_1$ can be derived from $\ket{0-1}\bra{0-1}$, $\ket{0+1}\bra{0+1}$ and $\ket{2}\bra{2}$ and the corresponding Rank-2 projective measurement will be $M_2=\mathbb{I}-M_1$. Since we have already discussed the postmeasurement states when the outcome is 1, we will now discuss in detail when Charlie measures in the rank-2 projective measurement operator $M_2$.

\textbf{Case-1(when $M_1=\ket{2}\bra{2}$)}

Therefore,  $M_2=\ket{0}\bra{0}+\ket{1}\bra{1}$ and the corresponding postmeasurement states would be 

\begin{equation}
	\begin{array}{c}
		\label{S2EC01}
		\ket{0}\ket{0}\ket{0 \pm 1}\\
		\ket{2}\ket{1}\ket{0 \pm 1}\\
		\ket{0 \pm 1}\ket{1}\ket{0 - 1}\\
		\ket{1 \pm 2}\ket{0}\ket{0 - 1}
	\end{array}
\end{equation}

These can be distinguished by LPCC.[Note that if Charlie measures in $M_1=\ket{0-1}\bra{0-1}$ and $M_2=\ket{0+1}\bra{0+1}$, the postmeasurement states would be similar to (\ref{S2EC-}) and (\ref{S2EC+}) respectively]. Now we have to check whether the activation of nonlocality is possible for the set of orthogonal states in (\ref{S2EC01}). Since the states in (\ref{S2EC01}) are orthogonal product states in $\mathbb{C}^{3 \otimes 2 \otimes 2}$, Alice can only activate nonlocality by corollary-2. Since Alice has dimension 3, there must be a rank-1 projective measurement operator. It is easy to check that there is no such operator that holds the orthogonality of the states in (\ref{S2EC01}). Hence activation of nonlocality is not possible for the set (\ref{S2EC01}).

\textbf{Case-2(when $M_1=\ket{0-1}\bra{0-1}$)}

Therefore,  $M_2=\ket{0+1}\bra{0+1}+\ket{2}\bra{2}$ and the corresponding postmeasurement states would be

\begin{equation}
	\begin{array}{c}
		\label{S2EC+2}
		\ket{0}(\ket{0}\ket{0+1}+\ket{0-1}\ket{2})\\
		\ket{0}\ket{0+1}\ket{2}\\
		\ket{1}\ket{0-1}\ket{2}\\
		\ket{2}\ket{0+1}\ket{2}\\
		\ket{2}(\ket{0-1}\ket{2}-\ket{1}\ket{0+1})
	\end{array}
\end{equation}

These are orthogonal pure states of $\mathbb{C}^{3 \otimes 2 \otimes 2}$ system, each of them is a product state in $A|BC$ partition and thus can be distinguished under LPCC. (Note that if Charlie measures in $M_1=\ket{2}\bra{2}$ and $M_2=\ket{0+1}\bra{0+1}$, the postmeasurement states would be similar to (\ref{S2EC2}) and (\ref{S2EC+}) respectively). Now to enable nonlocality in (\ref{S2EC+2}), by theorem-1 we know that Only Alice can give a non-trivial OPLM. Then Alice's measurement choices must be $\ket{0}\bra{0}$,$\ket{1}\bra{1}$ and $\ket{2}\bra{2}$ and the corresponding postmeasurement states are respectively

\begin{equation}
	\begin{array}{c}
		\ket{0}(\ket{0}\ket{0+1}+\ket{0-1}\ket{2})\\
		\ket{0}\ket{0+1}\ket{2}\\
	\end{array}
\end{equation}

\begin{center}
	and
\end{center}
\begin{equation}
	\begin{array}{c}
		\ket{1}\ket{0-1}\ket{2}
	\end{array}
\end{equation}
\begin{center}
	and
\end{center}

\begin{equation}
	\begin{array}{c}
		\ket{2}\ket{0+1}\ket{2}\\
		\ket{2}(\ket{0-1}\ket{2}-\ket{1}\ket{0+1})
	\end{array}
\end{equation}

But these can always be distinguished by LPCC. So in this case too, measurement choices will not help active nonlocality.

\textbf{Case-3(when $M_1=\ket{0+1}\bra{0+1}$)}

Therefore,  $M_2=\ket{0-1}\bra{0-1}+\ket{2}\bra{2}$ and the corresponding postmeasurement states would be

\begin{equation}
	\label{S2EC-2}
	\begin{array}{c}
		\ket{0}\ket{0-1}\ket{2}\\
		\ket{0}(\ket{0}\ket{0-1}-\ket{0+1}\ket{2})\\
		\ket{1}\ket{0-1}\ket{2}\\
		\ket{2}(\ket{0+1}\ket{2}-\ket{1}\ket{0-1})\\
		\ket{2}\ket{0-1}\ket{2}\\
		\ket{0\pm1}\ket{1}\ket{0-1}\\
		\ket{1\pm2}\ket{0}\ket{0-1}
	\end{array}
\end{equation}

These are orthogonal pure states of $\mathbb{C}^{3 \otimes 2 \otimes 2}$ system, each of them is a product state in $A|BC$ partition and thus can be distinguished under LPCC. (Note that if Charlie measures in $M_1=\ket{0-1}\bra{0-1}$ and $M_2=\ket{2}\bra{2}$, the postmeasurement states would be similar to (\ref{S2EC-}) and (\ref{S2EC2}) respectively). Now to activate nonlocality in (\ref{S2EC-2}), by theorem-1 we know that Only Alice can give a non-trivial OPLM. But here it is impossible for Alice to give a non-trivial projective measurement while keeping all states orthogonal, so it is impossible to enable nonlocality for the set (\ref{S2EC-2}). 

All the above studies together indicate that nonlocality activation under LPCC is impossible for the $S_2$ set.

\subsection{Proof of theorem 4.}
\label{theorem_4}
Here, Bob and Charlie will give the joint projective measurement operators 
\begin{equation}
	\begin{array}{c}
		M_1=\ket{00}\bra{00}+\ket{02}\bra{02}+\ket{11}\bra{11}\\
		M_2=\ket{01}\bra{01}+\ket{10}\bra{10}+\ket{12}\bra{12}
	\end{array}
\end{equation}

If the outcome is 1, the postmeasurement states would be

\begin{equation}
	\label{S2EB00|02|11}
	\begin{array}{c}
		\ket{0}\ket{00 \pm 02}\\
		\ket{1}\ket{02}\\
		\ket{2}\ket{02 \pm 11}\\
		\ket{0\pm1}\ket{11}\\
		\ket{1\pm2}\ket{00}
	\end{array}
\end{equation}

Now we can rewrite the above set of states in $\{\ket{0},\ket{1},\ket{2}\}$ basis in replacing of $\{\ket{00},\ket{02},\ket{11}\}$ as

\begin{equation}
	\begin{array}{c}
		\ket{0}\ket{0 \pm 1}\\
		\ket{0\pm1}\ket{2}\\
		\ket{1\pm2}\ket{0}\\
		\ket{2}\ket{1 \pm 2}\\
		\ket{1}\ket{1}
	\end{array}
\end{equation}

These are exactly the same nine pure product states as written in (\ref{BennettPB1999}) and are therefore not locally distinguishable.

If the outcome is 2, the postmeasurement states would be

\begin{equation}
	\label{S2EB01|10|12}
	\begin{array}{c}
		\ket{0}\ket{01 \pm 12}\\
		\ket{0\pm1}\ket{10}\\
		\ket{1\pm2}\ket{01}\\
		\ket{2}\ket{12 \pm 10}\\
		\ket{1}\ket{12}
	\end{array}
\end{equation}

Now we can rewrite the above set of states in $\{\ket{0},\ket{1},\ket{2}\}$ basis in replacing of $\{\ket{01},\ket{12},\ket{10}\}$ as similar to (\ref{BennettPB1999}) and hence locally indistinguishable.	

%
\end{document}